\def\@seccntformat#1{\@ifundefined{#1@cntformat}%
   {\csname the#1\endcsname\quad}  
   {\csname #1@cntformat\endcsname}
}
\let\oldappendix\appendix 
\renewcommand\appendix{%
    \oldappendix
    \newcommand{\section@cntformat}{\appendixname~\thesection:\,\,}
}
\newcommand{\Lyx}{L\kern-.1667em\lower.25em\hbox{y}\kern-.125emX\spacefactor1000}
\newcommand{\rev}[1]{\textcolor{blue}{#1}}
\newcommand{\bm}[1]{\boldsymbol{#1}}
\newtheorem{theorem}{Theorem}
\newtheorem{example}[theorem]{Example}
\newtheorem{lemma}[theorem]{Lemma}
\newtheorem{proposition}[theorem]{Proposition}
\newtheorem{remark}[theorem]{Remark}
\newtheorem{assumption}[theorem]{Assumption}
\renewcommand{\O}{\Omega}
\renewcommand{\o}{\omega}
\newcommand{\F}{\mathcal{F}}
\newcommand{\E}{\mathbb{E}}
\newcommand{\N}{\mathbb{N}}
\newcommand{\R}{\mathbb{R}}
\renewcommand{\P}{\mathbb{P}}
\let\abs=\envert
\DeclareMathOperator{\Cov}{Cov}
\DeclareMathOperator{\V}{Var}
\DeclareMathOperator{\MAD}{MAD}
\DeclareMathOperator{\MSAD}{MSAD}
\DeclareMathOperator{\sgn}{sgn}
\DeclareMathOperator{\argmin}{argmin\ }
\DeclareMathOperator{\RC}{RC}
\newcommand{\of}[1]{\ensuremath{\left( #1 \right)}}
\newcommand{\cb}[1]{\ensuremath{ \left\{ #1 \right\} }}
\newcommand{\sqb}[1]{\ensuremath{ \left[ #1 \right] }}
\def\prehp(#1,#2){\ensuremath{  #1 \cdot #2 }}
\definecolor{armygreen}{rgb}{0.19, 0.53, 0.43}
\let\DOTSI\relax
\def\RIfM@{\relax\ifmmode}
\def\FN@{\futurelet\next}
\def\iint{\DOTSI\intno@\tw@\FN@\ints@}
\def\iiint{\DOTSI\intno@\thr@@\FN@\ints@}
\def\iiiint{\DOTSI\intno@4 \FN@\ints@}
\def\idotsint{\DOTSI\intno@\z@\FN@\ints@}
\def\ints@{\findlimits@\ints@@}
\newif\iflimtoken@
\newif\iflimits@
\def\findlimits@{\limtoken@true\ifx\next\limits\limits@true
 \else\ifx\next\nolimits\limits@false\else
 \limtoken@false\ifx\ilimits@\nolimits\limits@false\else
 \ifinner\limits@false\else\limits@true\fi\fi\fi\fi}
\def\multint@{\int\ifnum\intno@=\z@\intdots@                                
 \else\intkern@\fi                                                          
 \ifnum\intno@>\tw@\int\intkern@\fi                                         
 \ifnum\intno@>\thr@@\int\intkern@\fi                                       
 \int}                                                                      
\def\multintlimits@{\intop\ifnum\intno@=\z@\intdots@\else\intkern@\fi
 \ifnum\intno@>\tw@\intop\intkern@\fi
 \ifnum\intno@>\thr@@\intop\intkern@\fi\intop}
\def\intic@{\mathchoice{\hskip.5em}{\hskip.4em}{\hskip.4em}{\hskip.4em}}
\def\negintic@{\mathchoice
 {\hskip-.5em}{\hskip-.4em}{\hskip-.4em}{\hskip-.4em}}
\def\ints@@{\iflimtoken@                                                    
 \def\ints@@@{\iflimits@\negintic@\mathop{\intic@\multintlimits@}\limits    
  \else\multint@\nolimits\fi                                                
  \eat@}                                                                    
 \else                                                                      
 \def\ints@@@{\iflimits@\negintic@
  \mathop{\intic@\multintlimits@}\limits\else
  \multint@\nolimits\fi}\fi\ints@@@}
\def\intkern@{\mathchoice{\!\!\!}{\!\!}{\!\!}{\!\!}}
\def\plaincdots@{\mathinner{\cdotp\cdotp\cdotp}}
\def\intdots@{\mathchoice{\plaincdots@}
 {{\cdotp}\mkern1.5mu{\cdotp}\mkern1.5mu{\cdotp}}
 {{\cdotp}\mkern1mu{\cdotp}\mkern1mu{\cdotp}}
 {{\cdotp}\mkern1mu{\cdotp}\mkern1mu{\cdotp}}}
\newif\iffirstchoice@
\def\textfonti{\the\textfont\@ne}
\def\textfontii{\the\textfont\tw@}
\def\text{\RIfM@\expandafter\text@\else\expandafter\text@@\fi}
\def\text@@#1{\leavevmode\hbox{#1}}
\def\text@#1{\mathchoice
 {\hbox{\everymath{\displaystyle}\def\textfonti{\the\textfont\@ne}%
  \def\textfontii{\the\textfont\tw@}\textdef@@ T#1}}
 {\hbox{\firstchoice@false
  \everymath{\textstyle}\def\textfonti{\the\textfont\@ne}%
  \def\textfontii{\the\textfont\tw@}\textdef@@ T#1}}
 {\hbox{\firstchoice@false
  \everymath{\scriptstyle}\def\textfonti{\the\scriptfont\@ne}%
  \def\textfontii{\the\scriptfont\tw@}\textdef@@ S\rm#1}}
 {\hbox{\firstchoice@false
  \everymath{\scriptscriptstyle}\def\textfonti
  {\the\scriptscriptfont\@ne}%
  \def\textfontii{\the\scriptscriptfont\tw@}\textdef@@ s\rm#1}}}
\def\textdef@@#1{\textdef@#1\rm\textdef@#1\bf\textdef@#1\sl\textdef@#1\it}
\def\DN@{\def\next@}
\def\eat@#1{}
\def\textdef@#1#2{%
 \DN@{\csname\expandafter\eat@\string#2fam\endcsname}%
 \if S#1\edef#2{\the\scriptfont\next@\relax}%
 \else\if s#1\edef#2{\the\scriptscriptfont\next@\relax}%
 \else\edef#2{\the\textfont\next@\relax}\fi\fi}
\def\Let@{\relax\iffalse{\fi\let\\=\cr\iffalse}\fi}
\def\vspace@{\def\vspace##1{\crcr\noalign{\vskip##1\relax}}}
\def\multilimits@{\bgroup\vspace@\Let@
 \baselineskip\fontdimen10 \scriptfont\tw@
 \advance\baselineskip\fontdimen12 \scriptfont\tw@
 \lineskip\thr@@\fontdimen8 \scriptfont\thr@@
 \lineskiplimit\lineskip
 \vbox\bgroup\ialign\bgroup\hfil$\m@th\scriptstyle{##}$\hfil\crcr}
\def\Sb{_\multilimits@}
\def\endSb{\crcr\egroup\egroup\egroup}
\def\Sp{^\multilimits@}
\newdimen\ex@
\def\rightarrowfill@#1{$#1\m@th\mathord-\mkern-6mu\cleaders
 \hbox{$#1\mkern-2mu\mathord-\mkern-2mu$}\hfill
 \mkern-6mu\mathord\rightarrow$}
\def\leftarrowfill@#1{$#1\m@th\mathord\leftarrow\mkern-6mu\cleaders
 \hbox{$#1\mkern-2mu\mathord-\mkern-2mu$}\hfill\mkern-6mu\mathord-$}
\def\leftrightarrowfill@#1{$#1\m@th\mathord\leftarrow\mkern-6mu\cleaders
 \hbox{$#1\mkern-2mu\mathord-\mkern-2mu$}\hfill
 \mkern-6mu\mathord\rightarrow$}
\def\overrightarrow{\mathpalette\overrightarrow@}
\def\overrightarrow@#1#2{\vbox{\ialign{##\crcr\rightarrowfill@#1\crcr
 \noalign{\kern-\ex@\nointerlineskip}$\m@th\hfil#1#2\hfil$\crcr}}}
\def\overleftarrow{\mathpalette\overleftarrow@}
\def\overleftarrow@#1#2{\vbox{\ialign{##\crcr\leftarrowfill@#1\crcr
 \noalign{\kern-\ex@\nointerlineskip}$\m@th\hfil#1#2\hfil$\crcr}}}
\def\overleftrightarrow{\mathpalette\overleftrightarrow@}
\def\overleftrightarrow@#1#2{\vbox{\ialign{##\crcr\leftrightarrowfill@#1\crcr
 \noalign{\kern-\ex@\nointerlineskip}$\m@th\hfil#1#2\hfil$\crcr}}}
\def\underrightarrow{\mathpalette\underrightarrow@}
\def\underrightarrow@#1#2{\vtop{\ialign{##\crcr$\m@th\hfil#1#2\hfil$\crcr
 \noalign{\nointerlineskip}\rightarrowfill@#1\crcr}}}
\def\underleftarrow{\mathpalette\underleftarrow@}
\def\underleftarrow@#1#2{\vtop{\ialign{##\crcr$\m@th\hfil#1#2\hfil$\crcr
 \noalign{\nointerlineskip}\leftarrowfill@#1\crcr}}}
\def\underleftrightarrow{\mathpalette\underleftrightarrow@}
\def\underleftrightarrow@#1#2{\vtop{\ialign{##\crcr$\m@th\hfil#1#2\hfil$\crcr
 \noalign{\nointerlineskip}\leftrightarrowfill@#1\crcr}}}
\def\frac#1#2{{#1 \over #2}}
\def\GRAPHICSPS#1{%
\ifnum\GRAPHICSTYPE=1 language "PS", include "#1"\else%
ps: #1\fi}
\def\graffile#1#2#3#4{\leavevmode\raise -#4 \hbox{%
\raise #3 \hbox{\rule{0.003in}{0.003in}\special{#1}}}%
{\raise -#4 \hbox to #2 {\vrule height#3 width0in depth0in\hfil}}%
}
\def\draftbox#1#2#3#4{\leavevmode\raise -#4 \hbox{\frame{\rlap{\protect\tiny #1}%
\hbox to #2{\vrule height#3 width0in depth0in\hfil}}}}
\def\GRAPHIC#1#2#3#4#5{\ifnum\draft=1 \draftbox{#2}{#3}{#4}{#5}\else%
\graffile{#1}{#3}{#4}{#5}\fi}
\def\addtoLaTeXparams#1{\edef\LaTeXparams{\LaTeXparams #1}}
\def\doFRAMEparams#1{\readFRAMEparams#1\end}
\def\readFRAMEparams#1{%
\ifx#1\end%
\let\next=\relax%
\else%
\ifx#1i%
\dispkind=0%
\fi%
\ifx#1d%
\dispkind=1%
\fi%
\ifx#1f%
\dispkind=2%
\fi%
\ifx#1t%
\addtoLaTeXparams{t}%
\fi%
\ifx#1b%
\addtoLaTeXparams{b}%
\fi%
\ifx#1p%
\addtoLaTeXparams{p}%
\fi%
\ifx#1h%
\addtoLaTeXparams{h}%
\fi%
\let\next=\readFRAMEparams%
\fi%
\next%
}
\def\IFRAME#1#2#3#4#5{\GRAPHIC{#5}{#4}{#1}{#2}{#3}}
\def\DFRAME#1#2#3#4{
  \begin{center}
    \GRAPHIC{#4}{#3}{#1}{#2}{0in} 
  \end{center}
}
\def\FFRAME#1#2#3#4#5#6#7{
  \begin{figure}[#1]
    \begin{center}
      \GRAPHIC{#7}{#6}{#2}{#3}{0in}
    \end{center}
    \caption{\label{#5}#4}
  \end{figure}
}
\def\FRAME#1#2#3#4#5#6#7#8{%
\def\LaTeXparams{}%
\dispkind=0%
\def\LaTeXparams{}%
\doFRAMEparams{#1}%
\ifnum\dispkind=0%
\IFRAME{#2}{#3}{#4}{#7}{#8}%
\else
  \ifnum\dispkind=1
    \DFRAME{#2}{#3}{#7}{#8}
  \else
    \ifnum\dispkind=2
      \FFRAME{\LaTeXparams}{#2}{#3}{#5}{#6}{#7}{#8}
    \fi
  \fi
\fi
}
\long\def\QQQ#1#2{}
\def\QTP#1{}
\long\def\QQA#1#2{}
\def\EXPAND#1[#2]#3{}
\def\NOEXPAND#1[#2]#3{}
\def\LaTeXparent#1{}
\def\input gnuindex.sty\makeindex{\input gnuindex.sty\makeindex}
\def\initial#1{\bigbreak{\raggedright\large\bf #1}\kern 2pt\penalty3000}
\newdimen\theight
\def \Column{%
             \vadjust{\setbox0=\hbox{\scriptsize\quad\quad tcol}%
             \theight=\ht0
             \advance\theight by \dp0    \advance\theight by \lineskip
             \kern -\theight \vbox to \theight{\rightline{\rlap{\box0}}%
             \vss}%
             }}%
\def\qed{\ifhmode\unskip\nobreak\fi\ifmmode\ifinner\else\hskip5\p@\fi\fi
 \hbox{\hskip5\p@\vrule width4\p@ height6\p@ depth1.5\p@\hskip\p@}}
\begin{document}

\title{
\textbf{MAD Risk Parity Portfolios}}
\author{\c{C}a\u{g}{\i}n Ararat$^1$, Francesco Cesarone$^2$, Mustafa \c{C}elebi P{\i}nar$^1$, Jacopo Maria Ricci$^{2,3}$\\
	{\small $^1$ \emph{Bilkent University - Department of Industrial Engineering}}\\
	{\footnotesize cararat@bilkent.edu.tr, mustafap@bilkent.edu.tr} \\
	{\small $^2$\emph{Roma Tre University - Department of Business Studies}}\\
	{\footnotesize francesco.cesarone@uniroma3.it, jacopomaria.ricci@uniroma3.it}\\
	{\small $^3$\emph{Bergamo University - Department of Economics}}\\
	{\footnotesize jacopomaria.ricci@unibg.it}\\
}
\maketitle

%
%

\begin{abstract}

In this paper, we investigate the features and the performance of the Risk Parity (RP) portfolios using the Mean Absolute Deviation (MAD) as a risk measure.
The RP model is a recent strategy for asset allocation that
aims at equally sharing the global portfolio risk among all the assets of an investment universe. We discuss here some existing and new results about the properties of MAD that are useful for the RP approach.
We propose several formulations for finding MAD-RP portfolios computationally, and compare them in terms of accuracy and efficiency.
Furthermore, we provide extensive empirical analysis based on three real-world datasets, showing that the performances of the RP approaches generally tend to place both in terms of risk and profitability between those obtained from the minimum risk and the Equally Weighted strategies.

\bigskip \noindent \textbf{Keywords}: Mean Absolute Deviation, Risk Parity, Portfolio Optimization, Risk Diversification
\end{abstract}

\section{Introduction}

The 2008 subprime financial crisis led many scholars and practitioners
to strong criticism of classical risk-gain models and to the consequent development of new portfolio selection strategies
based on the concept of risk allocation.
Indeed,
even though risk-gain models often have nice features in terms of formulation and of tractability,
they often show several drawbacks, such as their high sensitivity to estimation errors of the input parameters \citep[in particular, of expected returns, see, e.g.,][]{Best1991b,Best1991a,Chopra1993,Michaud1998,Demiguel2009},
or their lack of risk diversification.
%
%
A straightforward method to deal with this issue could be
the choice of the Equally Weighted (EW) portfolio,
where the invested capital is equally distributed
among the assets that belong to the investment universe.
However, if the investment universe consists of assets with very different
intrinsic risks, then the resulting portfolio has limited total risk diversification.
%
%

A more refined risk-focused method is the Risk Parity (RP), also called Equal Risk Contribution. It consists of selecting a portfolio where each asset equally contributes to the total portfolio
risk, regardless of their estimated expected returns \citep{Maillard2010}.
The RP strategy has its roots
in the practitioner world \citep[see, e.g.,][]{fabozzi2021risk,liu2020}, indeed it is often considered as a heuristic method.
However,
several recent theoretical and empirical findings justify the growing popularity of the RP strategy
in the practice of asset allocation.

From an empirical viewpoint,
Risk Parity portfolios show smaller sensitivity to estimation errors of the input parameters
compared to portfolios based on risk minimization and on other optimization strategies \citep{cesarone2020stability}.
Furthermore, Risk Parity portfolios seem to show promising out-of-sample performance
\citep{fisher2015risk,jacobsen2020risk}.

From a theoretical viewpoint,
since an RP portfolio
can be found by solving a minimization problem with logarithmic constraints on the
portfolio weights,
the RP strategy can be interpreted as a minimum risk approach with a constraint on
the minimum level of diversification, which can be seen as a sort of regularization.
Furthermore,
an example of the theoretical justification of the Risk Parity approach can be found in
\cite{oderda2015stochastic},
where the author
shows that
the analytic form of the optimal
portfolio solution
obtained by maximizing
portfolio relative logarithmic
wealth at a fixed tracking risk level
with respect to the market-capitalization-weighted index,
consists of a linear combination
of this index,
the global minimum
variance portfolio, the EW portfolio, the RP
portfolio, and the high cash flow rate of return portfolio.

The risk measure commonly used in the RP approach is volatility \citep{Maillard2010,Roncalli2013}. %
Other risk
measures have been considered in the literature
such as the Conditional Value-at-Risk \citep{Boudt2013,cesarone2018minimum,mausser2018long} and the Expectiles \citep{bellini2021risk}, both belonging to the class of coherent risk measures.
However, the long-only Risk Parity model
has a unique solution when the risk is positive, convex
and positively homogeneous \citep{Cesarone2019Anoptimization}, and
for Conditional Value-at-Risk and Expectiles,
positivity is not always guaranteed \citep{cesarone2017equal,cesarone2018minimum,bellini2021risk}.	
An alternative risk measure to volatility, which is, by definition, positive for nonconstant market returns,
is the Mean Absolute Deviation (MAD) belonging to the
class of deviation risk measures \citep{Rockafellar2006}.
In risk-gain portfolio optimization analysis, it was introduced
by \cite{Konno91} as an alternative to the Markowitz model.

In this paper, the main goal is to investigate the theoretical properties and the performance of
the MAD Risk Parity portfolios,
providing, to the best of our knowledge, multiple contributions to the literature. First, we revisit existing theoretical results on MAD,
discuss its differentiability, and provide a characterization of multivariate random market returns for which MAD is additive.
We show the conditions that determine this characterization,
which can be seen as strong positive dependence among the asset returns.
Furthermore, under these conditions,
we provide a closed-form solution for the long-only MAD-RP portfolio.
Second, we establish the existence and uniqueness
of the MAD-RP portfolio, thus extending the theoretical results of \cite{Cesarone2019Anoptimization}.
Third, we propose several formulations to find the MAD-RP portfolios practically, thus comparing their performances both in terms of accuracy and efficiency.
Finally, we provide an extensive empirical analysis on three real-world datasets by
comparing the out-of-sample performance obtained from the global minimum volatility and MAD portfolios, the volatility and MAD Risk Parity portfolios, and the Equally Weighted portfolio.

The rest of the paper is organized as follows.
Section \ref{MAD} introduces the Mean Absolute Deviation risk measure,
providing a discussion of MAD properties relevant to our context.
In Section \ref{sec:RPwithMAD}, we show how to formulate the RP approach with MAD mathematically and how to find the MAD-RP portfolios in practice.
Section \ref{sec:EmpirCompRes} provides an extensive empirical analysis based on three real-world datasets.
More precisely, in Section \ref{sec:AccuEffi}, we test and compare all the MAD-RP formulations in terms of accuracy and efficiency, while in Sections \ref{sec:insample} and \ref{sec:out-of-sample}, we report a graphical comparison of some portfolio selection approaches and out-of-sample results, respectively. Finally, in Section \ref{sec:ConclusionsMADRP}, we draw some overall conclusions.

\section{Measuring the portfolio risk by MAD}\label{MAD}

Let $n\in\N$. We denote by $\R^n$ the $n$-dimensional Euclidean space and define the cones $\R^n_+:=\{\boldsymbol{x}\in\R^n\mid x_i\geq 0\mbox{ for each }i\in\{1,\ldots,n\}\}$, $\R^n_{++}:=\{\boldsymbol{x}\in\R^n\mid x_i>0\mbox{ for each } i\in\{1,\ldots,n\}\}$. We also denote by $\Delta^{n-1}$ the set of all $\boldsymbol{x}\in \R^n_+$ with $\sum_{i=1}^n x_i=1$.

To introduce the probabilistic setup, let us fix a complete probability space $(\O, \F, \P)$. For an event $A\in \F$, we define its indicator function by $\mathbbm{1}_A(\omega)=1$ for $\omega\in A$, and by $\mathbbm{1}_A(\omega)=0$ for $\omega\in \Omega\setminus A$. We denote by $L^0:=L^0(\O,\F,\P)$ the space of all $\F$-measurable and real-valued random variables, where two elements are distinguished up to $\P$-almost sure (a.s.) equality. We denote by $L^1:=L^1(\O,\F,\P)$ the set of all $X\in L^0$ with $\E[\abs{X}]<+\infty$.


The \emph{mean absolute deviation} (MAD) of a random variable is a measure of its dispersion and is defined as the expected value of the absolute deviations
from a reference value that is generally represented by the mean of the random variable: 
\begin{equation}\label{eq:MADdef}
  \MAD(X):=\E[\abs{X-\E[X]}] \, ,\quad X\in L^1.
\end{equation}
MAD was first used in the field of portfolio optimization by \cite{Konno91} as \rev{a} symmetric risk measure alternative to variance. An upside
version of MAD is the \emph{mean semi-absolute deviation}
defined by
\begin{equation}\label{eq:SMADdef}
  \MSAD(X):=\E[(X-\E[X])^+] \, ,\quad X\in L^1,
\end{equation}
where $a^+:=\max\{a,0\}$ denotes the positive part of $a\in\R$. As it is well-known, we have $\MSAD(X)=\frac12 \MAD(X)$.

\subsection{MAD as a deviation measure \label{subsec:DevMeas}}

The next proposition summarizes the properties of MAD as a \emph{deviation measure}. We omit its straightforward proof for brevity and refer the reader to \cite{Rockafellar2006} for a general treatment of deviation measures.


\begin{proposition}\label{prop:dev}
The functional $\MAD\colon L^1 \to [0, \infty)$ is a deviation measure, that is, it satisfies the following properties for every $X,Y\in L^1$:
\begin{enumerate}[label=(\roman*)]
	\item \textit{Translation invariance:} $\MAD(X+\alpha)=\MAD(X)$ for every $\alpha\in \R$.
	\item \textit{Positive homogeneity:} $\MAD(\lambda X)=\lambda\MAD(X)$ for every $\lambda\geq 0$.
	\item \textit{Subadditivity:} $\MAD(X+Y)\leq \MAD(X)+\MAD(Y)$.
	\item \textit{Normalization:} $\MAD(0)=0$.
	\item \textit{Strict positivity:} It holds $\MAD(X)>0$ if and only if $X$ is not equal to a constant $\P$-a.s.
\end{enumerate}
\end{proposition}

\subsection{Additivity of MAD}\label{sec:add}

Next, we provide a property that allows for finding a closed form solution to MAD-RP portfolios,
namely, a characterization of random variables for which MAD is additive. To the best of our knowledge, this basic observation is new to the current paper.

\begin{lemma}[Additivity of MAD]\label{MADAdditive}
	Let $X,Y \in L^1$.
Then,
\begin{equation}\label{eq:add}
\MAD(X+Y)=\MAD(X)+\MAD(Y)
\end{equation}
if and only if
	\begin{equation}\label{eq:add3}
	(X-\E[X])(Y-\E[Y])\geq 0 \quad \P\mbox{-a.s}.
	\end{equation}
\end{lemma}

\begin{proof}
  See Appendix \ref{app:Lemma}.
\end{proof}

Just as the portfolio volatility is additive for asset returns that are perfectly correlated
 and the portfolio Conditional Value-at-Risk (CVaR) is additive for asset returns that are comonotone \citep{tasche2002expected}, in view of Lemma~\ref{MADAdditive}, MAD is additive for random returns $X, Y$ if and only if the deviations of $X$ and $Y$ from their respective
means have the same sign almost surely.
Hence, Condition \eqref{eq:add3} can be interpreted as a kind of positive dependence. 
Furthermore, we point out that in the case of additivity of homogeneous and subadditive risk measures, the sum of the risks of $X$ and $Y$ is an upper bound for the risk of $X + Y$. Such an upper bound represents the worst portfolio risk, and it occurs for volatility when the linear correlation between $X$ and $Y$ is equal to 1, for CVaR when $X$ and $Y$ are comonotone, and for MAD when $X$ and $Y$ satisfy Condition \eqref{eq:add3}.

\begin{remark}\label{rem:corr}
	Clearly, if $X,Y$ are square-integrable random variables that satisfy \eqref{eq:add3}, then they are positively correlated (i.e., nonnegative covariance). Moreover, if $X,Y$ are square-integrable random variables with perfect positive correlation, then \eqref{eq:add3} holds. Indeed, having perfect positive correlation implies that $Y=aX+b$ for some $a>0$ and $b\in\R$ so that $(X-\E[X])(Y-\E[Y])=a(X-\E[X])^2\geq 0$ $\P$-a.s. In the next remark and example, we prove that the converse is not true in general.
	\end{remark}

\begin{remark}\label{rem:additivity}
	Let $\alpha\in (0,1)$. For a square-integrable random variable $X$, its $\alpha$-expectile is defined as 
		\[
		e_\alpha(X):=\underset{r\in\R}{\argmin}\E[\alpha((X-r)^+)^2+(1-\alpha)((X-r)^-)^2].
		\]
		In \citet[Theorem~3]{bellini2021risk}, it is shown that the additivity of the $\alpha$-expectiles at two square-integrable random variables $X,Y$, i.e., $e_\alpha(X+Y)=e_\alpha(X)+e_\alpha(Y)$, is equivalent to the condition
		\begin{equation}\label{eq:expectile}
			(X-e_\alpha(X))(Y-e_\alpha(Y))\geq 0\quad \P\mbox{-a.s}.
		\end{equation}
		whenever $\alpha\in (\frac12,1)$. In \citet[Example~2]{bellini2021risk}, it is assumed that $\P\{X=0,Y=0\}=\P\{X=0,Y=1\}=\P\{X=1,Y=0\}=\P\{X=1,Y=1\}=\frac16$ and $\P\{X=2,Y=2\}=\frac13$. In this case, they show that \eqref{eq:expectile} holds for $\alpha=\frac34$ and $X,Y$ are not comonotonic. In the same example, note that we have $\E[X]=\E[Y]=1$, $\E[XY]=\frac32$, $\MAD(X)=\MAD(Y)=\frac23$, and $\MAD(X+Y)=\frac43$ so that \eqref{eq:add3} also holds. In particular, this example shows that \eqref{eq:add3} does not imply comonotonicity. Since $\V(X)=\V(Y)=\frac{2}{3}$ and $\Cov(X,Y)=\frac12$, $X$ and $Y$ are not perfectly positively correlated. Hence, \eqref{eq:add3} does not imply perfect positive correlation either. As another example, assuming that $\P\{X=0,Y=0\}=\P\{X=1,Y=1\}=\P\{X=2,Y=1\}=\frac13$, they show that \eqref{eq:expectile} does not hold for $\alpha=\frac34$ but $X,Y$ are comonotonic. Note that $\E[X]=1$, $\E[Y]=\frac23$, $\MAD(X)=\frac23$, $\MAD(Y)=\frac{4}{9}$, and $\MAD(X+Y)=\frac{11}{9}$ so that \eqref{eq:add3} does not hold either. In particular, comonotonicity does not imply \eqref{eq:add3}.
\end{remark}

As shown in Remarks~\ref{rem:corr},~\ref{rem:additivity}, two square-integrable random variables that satisfy \eqref{eq:add3} are positively correlated but not necessarily perfectly correlated. Indeed, the correlation coefficient of such random variables can be any value in $[0,1]$ as the next example shows.

\begin{example}
	Let $\mu_X,\mu_Y\in\R$ and $\rho\in[0,1]$. Suppose that there exist two square-integrable positive random variables $R_X,R_Y$ such that
	\[
	\frac{\E[R_XR_Y]}{\sqrt{\E[R_X^2]\E[R_Y^2]}}=\rho.
	\]
	Suppose also that there exists an event $A\in\F$ such that
	\begin{equation}\label{eq:RX}
	\E[R_X \mathbbm{1}_A]=\frac{\E[R_X]}{2},\quad \E[R_Y \mathbbm{1}_A]=\frac{\E[R_Y]}{2}.
	\end{equation}
	(For instance, any event $A\in\F$ that is independent of $R_X, R_Y$ and has $\P(A)=\frac12$ satisfies this condition.) Let $S:=\mathbbm{1}_A-\mathbbm{1}_{A^c}$ and define
	\[
	X:=\mu_X+SR_X,\quad Y:=\mu_Y+SR_Y.
	\]
	Thanks to \eqref{eq:RX}, we have $\E[SR_X]=\E[SR_Y]=0$ so that $\E[X]=\mu_X$, $\E[Y]=\mu_Y$. Hence,
	\[
	\Cov(X,Y)=\E[(X-\E[X])(Y-\E[Y])]=\E[S^2R_XR_Y]=\E[R_XR_Y]=\rho\sqrt{\E[R_X^2]\E[R_Y^2]},
	\]
	which implies that the correlation coefficient of $X,Y$ is $\rho$.
	\end{example}

The next theorem is a generalization of Lemma~\ref{MADAdditive} for the multivariate setting. It provides a list of equivalent statements that can be seen as the definition of additivity for MAD when multiple random variables are considered.

\begin{theorem} \label{th:AddGen}
  Let $Y_1, \ldots, Y_n\in L^1$. Then, the following are equivalent.
  \begin{enumerate}[label=(\roman*)]
  	\item $\MAD(\sum_{i=1}^n x_iY_i)=\sum_{i=1}^n x_i\MAD(Y_i)$ for every $\boldsymbol{x}\in\R^n_+$.
  	\item $\MAD(\sum_{i=1}^n x_iY_i)=\sum_{i=1}^n x_i\MAD(Y_i)$ for every $\boldsymbol{x}\in\Delta^{n-1}$.
  	\item $\MAD(\sum_{i\in I}Y_i)=\sum_{i\in I}\MAD(Y_i)$ for every nonempty subset $I\subseteq \{1,\ldots,n\}$.
  	\item $(Y_i-\E[Y_i])(Y_j-\E[Y_j])\geq 0$ $\P$-a.s. for every $i,j\in\{1,\ldots,n\}$ with $i\neq j$.
  	\end{enumerate}
\end{theorem}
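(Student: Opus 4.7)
The plan is to prove the four equivalences in a cyclic fashion, (i) $\Rightarrow$ (ii) $\Rightarrow$ (iii) $\Rightarrow$ (iv) $\Rightarrow$ (i), where the only nontrivial step is the last one, since the previous three reduce to standard observations plus one appeal to Theorem \ref{MADAdditive}.

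The implication (i) $\Rightarrow$ (ii) is immediate because $\Delta^{n-1}\subseteq \R^n_+$. For (ii) $\Rightarrow$ (iii), I would fix a nonempty $I\subseteq\{1,\ldots,n\}$, take the weight vector $\boldsymbol{x}\in\Delta^{n-1}$ with $x_i=\frac{1}{\abs{I}}$ for $i\in I$ and $x_i=0$ otherwise, apply (ii), and then multiply both sides by $\abs{I}$, using that MAD is positively homogeneous as a deviation measure (property \ref{D2}). For (iii) $\Rightarrow$ (iv), I would specialize to $I=\{i,j\}$ with $i\neq j$; the resulting equality $\MAD(Y_i+Y_j)=\MAD(Y_i)+\MAD(Y_j)$ is exactly the hypothesis of Theorem \ref{MADAdditive}, whose conclusion is precisely (iv).

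The heart of the argument is (iv) $\Rightarrow$ (i). Set $Z_i:=Y_i-\E[Y_i]$. I would first observe that (iv) implies an \emph{all-same-sign} statement: for $\P$-almost every $\omega$ the real numbers $Z_1(\omega),\ldots,Z_n(\omega)$ cannot include both a strictly positive and a strictly negative value, since any such pair would violate $Z_i(\omega)Z_j(\omega)\geq 0$. Hence, outside a $\P$-null set, either $Z_i(\omega)\geq 0$ for all $i$ or $Z_i(\omega)\leq 0$ for all $i$. Given $\boldsymbol{x}\in\R^n_+$, on the first event $\sum_{i=1}^n x_iZ_i\geq 0$ and $\abs{\sum_{i=1}^n x_iZ_i}=\sum_{i=1}^n x_iZ_i=\sum_{i=1}^n x_i\abs{Z_i}$; on the second event the same equality holds with a sign flip. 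Taking expectations yields $\MAD\bigl(\sum_{i=1}^n x_iY_i\bigr)=\E\bigl[\bigl\lvert\sum_{i=1}^n x_iZ_i\bigr\rvert\bigr]=\sum_{i=1}^n x_i\E[\abs{Z_i}]=\sum_{i=1}^n x_i\MAD(Y_i)$, which is (i).

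The step requiring most care is the passage from the pairwise almost-sure inequalities in (iv) to the collective all-same-sign statement on a common full-probability event. This is handled by taking the intersection of the $\binom{n}{2}$ almost-sure events, which is still of full probability, and then arguing pointwise on this event that the signs of $Z_1,\ldots,Z_n$ cannot mix; everything else is bookkeeping with positive homogeneity, translational invariance and Theorem \ref{MADAdditive}.
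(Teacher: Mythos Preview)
Your proof is correct. The cycle (i) $\Rightarrow$ (ii) $\Rightarrow$ (iii) $\Rightarrow$ (iv) matches the paper's reasoning almost verbatim (the paper instead shows (i) $\Rightarrow$ (ii), (ii) $\Rightarrow$ (i), (i) $\Rightarrow$ (iii), (iii) $\Rightarrow$ (iv), but these are the same elementary manipulations with positive homogeneity and Theorem~\ref{MADAdditive}).

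Where you genuinely diverge is in the key implication (iv) $\Rightarrow$ (i). The paper proceeds by induction on the index of the last nonzero coordinate of $\boldsymbol{x}$: at each step it peels off the top term $x_{k+1}Y_{k+1}$, verifies the two-variable sign condition for the pair $\bigl(\sum_{i\leq k}x_iY_i,\, x_{k+1}Y_{k+1}\bigr)$ using (iv), and invokes Theorem~\ref{MADAdditive} to split the MAD. Your argument is more direct: you intersect the finitely many almost-sure events from (iv) to get a single full-measure set on which the centered variables $Z_i=Y_i-\E[Y_i]$ are simultaneously nonnegative or simultaneously nonpositive, and then observe pointwise that $\bigl\lvert\sum_i x_iZ_i\bigr\rvert=\sum_i x_i\lvert Z_i\rvert$ for $\boldsymbol{x}\in\R^n_+$. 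This bypasses both the induction and the repeated appeals to Theorem~\ref{MADAdditive}, at the modest cost of spelling out the ``no mixed signs'' observation. The paper's route has the virtue of reusing the two-variable case as a black box; yours is shorter and makes the underlying mechanism (pointwise equality in the triangle inequality) completely transparent.
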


\begin{proof}
 See Appendix \ref{app:Lemma}.
\end{proof}

\subsection{Subdifferential of MAD}

In order to introduce the marginal risk contributions of portfolios, we provide a theoretical treatment of the subdifferential of MAD as a function of the portfolio weights in this section. Suppose that there are $n$ assets in the market. We write $\boldsymbol{r}=(r_1,\ldots,r_n)^{\mathsf{T}}$ for the random vector of asset returns where we assume $r_i\in L^1$ for each $i\in\{1,\ldots,n\}$. We also write $\boldsymbol{\mu}=(\mu_1,\ldots,\mu_n)^{\mathsf{T}}$ for the corresponding mean vector of $\boldsymbol{r}$, i.e., $\E[r_i]=\mu_i$ for each $i\in\{1,\ldots,n\}$. Furthermore, we indicate by $\boldsymbol{x}= (x_{1}, \ldots, x_{n})^{\mathsf{T}}\in\R^n$ a vector of portfolio weights, which are the decision variables of the problems tackled in this work. For such $\boldsymbol{x}$, we have $\sum_{i=1}^n x_i =1$ and we denote by $R(\boldsymbol{x})$ the random portfolio return. Note that we have
\begin{equation*}
	R(\boldsymbol{x})=\boldsymbol{r}^{\mathsf{T}}\boldsymbol{x}=\sum_{i=1}^{n}r_{i}x_{i} \, .
\end{equation*}
Then, with some abuse of notation, the risk of the portfolio measured by MAD is given by
\begin{equation}\label{def:MAD}
	\MAD(\boldsymbol{x}):=\MAD(R(\boldsymbol{x}))=\E\sqb{\abs{(\boldsymbol{r}-\boldsymbol{\mu})^{\mathsf{T}}\boldsymbol{x}}}=
	\E\sqb{\abs{\sum_{i=1}^{n}(r_{i}-\mu_{i})x_{i}}}.
\end{equation}

In \eqref{def:MAD}, we observe that the function $\boldsymbol{x}\mapsto \MAD(\boldsymbol{x})$ is generally not differentiable at some points in its domain due to the absolute value function. However, it is a finite, convex, hence also a continuous function on $\R^n$; thus, it has a nonempty subdifferential $\partial\MAD(\boldsymbol{x})$ at every given $\boldsymbol{x}\in\R^n$ as calculated by the next proposition.

\begin{proposition}\label{prop:subdifferentialMAD}
	Let $\bm{x}\in\R^n$. Then, the subdifferential of $\MAD$ at $\bm{x}$ is given by
	\[
	\partial \MAD(\bm{x})=\cb{\E\sqb{\of{ \sgn\of{( \bm{r}-\bm{\mu})^{\mathsf{T}}\bm{x}}+\eta \mathbbm{1}_{ \{(\bm{r}-\bm{\mu})^{\mathsf{T}}\bm{x}=0\}} } (\bm{r}-\bm{\mu})}\mid  \eta \in L^0,\ \eta\in [-1,1]\ \P\mbox{-a.s.}}.
	\]
	\end{proposition}

\begin{proof}
	By viewing $\MAD$ as an integral functional with respect to the probability measure $\P$, we may calculate its subdifferential using \citet[Theorem~23]{rockafellarintegral}, which works under the assumption that the probability space is complete. (See also \citet[Corollary, p.~179]{rockafellarwets1982} for a similar result with slightly different assumptions.) Note that we may write
	\[
	\MAD(\bm{x}) = \int_{\Omega}h(\omega,\bm{x})\P(d\omega), \quad \bm{x}\in\R^n,
	\]
	where $h\colon \O\times\R^n\to\R$ is defined by
	\[
	h(\omega,\bm{x}):=\abs{(\bm{r}(\omega)-\bm{\mu})^{\mathsf{T}}\bm{x}},\quad \omega\in\Omega,\bm{x}\in\R^n.
	\]
	For every $\bm{x}\in\R^n$, the function $\omega\mapsto h(\omega,\bm{x})$ on $\Omega$ is measurable since $\bm{r}$ is a random vector; for every $\omega\in\Omega$, the function $\bm{x}\mapsto h(\omega,\bm{x})$ is convex and continuous on $\R^n$. In particular, $h$ is measurable on $\Omega\times\R^n$. Let $\bm{\xi}\colon\O\to\R^n$ be a bounded random variable, we can find $M>0$ such that $\abs{\xi_i}\leq M$ for every $i\in\{1,\ldots,n\}$ $\P$-a.s. Then,
	\[
	\int_{\Omega}h(\omega,\bm{\xi}(\omega))\P(d\omega)=\int_{\Omega}\abs{(\bm{r}(\o)-\bm{\mu})^{\mathsf{T}}\bm{\xi}(\o)}\P(d\o)\leq M\sum_{i=1}^n \int_{\O}\abs{r_i(\o)-\mu_i}\P(d\o)<+\infty
	\]
	since $r_1,\ldots,r_n\in L^1$. Hence, all assumptions in \citet[Theorem~23]{rockafellarintegral} are satisfied and we obtain that
	\begin{equation}\label{eq:subdif}
	\partial\MAD(\bm{x})=\{\E[\bm{\xi}]\mid \bm{\xi}\in (L^1)^n,\ \P\{\o\in\O\mid \bm{\xi}(\o)\in \partial_{\bm{x}} h(\o,\bm{x})\}=1\},
	\end{equation}
	where, for each $\o\in\O$, $\partial_{\bm{x}}h(\o,\bm{x})$ denotes the subdifferential of the function $\bm{x}\mapsto h(\o,\bm{x})$ at $\bm{x}$. By standard rules of finite-dimensional subdifferential calculus (see, e.g., \citet[Theorem 23.9]{rockafellar1970convex}), we can calculate this set as
	\begin{equation}\label{eq:subdif2}
	\partial_{\bm{x}}h(\o,\bm{x})=(\bm{r}(\o)-\bm{\mu})\partial\abs{\cdot}((\bm{r}(\o)-\bm{\mu})^{\mathsf{T}}\bm{x}),
	\end{equation}
	where $\partial\abs{\cdot}$ denotes the (set-valued) subdifferential mapping of the absolute value function and is given by $\partial\abs{\cdot}(s)=\{\sgn(s)\}$ for $s\neq 0$, and by $\partial\abs{\cdot}(s)=[-1,+1]$ for $s=0$, which can be written compactly as
\begin{equation}\label{eq:subdif3}
\partial\abs{\cdot}(s)=\{\sgn(s)+t\mathbbm{1}_{\{0\}}(s)\mid t\in [-1,+1]\},\quad s\in\R.
\end{equation}
Combining \eqref{eq:subdif2}, \eqref{eq:subdif3}, we get
\[
\partial_{\bm{x}}h(\o,\bm{x})=\{(\sgn((\bm{r}(\o)-\bm{\mu})^{\mathsf{T}}\bm{x})+t\mathbbm{1}_{\{0\}}((\bm{r}(\o)-\bm{\mu})^{\mathsf{T}}\bm{x}))(\bm{r}(\o)-\bm{\mu})\mid t\in [-1,+1]\}.
\]
Combining this with \eqref{eq:subdif} yields that $\partial\MAD(\bm{x})$ is the set of all expectations of the form $\E[\bm{\xi}]$, where $\bm{\xi}\in (L^1)^n$ is such that
\begin{equation}\label{cond1}
\P\{\o\in\O\mid \exists t\in [-1,1]\colon \bm{\xi}(\o)=(\sgn((\bm{r}(\o)-\bm{\mu})^{\mathsf{T}}\bm{x})+t\mathbbm{1}_{\{(\bm{r}-\bm{\mu})^{\mathsf{T}}\bm{x}=0\}}(\o))(\bm{r}(\o)-\bm{\mu})\}=1.
\end{equation}
Here, $t$ depends on $\omega$ but its measurability is not guaranteed. To resolve this measurability issue, we show that condition \eqref{cond1} holds if and only if there exists $\eta\in L^0$ such that $-1\leq \eta\leq 1$ $\P$-a.s. and
\begin{equation}\label{cond2}
\P\{\o\in\O\mid  \bm{\xi}(\o)=(\sgn((\bm{r}(\o)-\bm{\mu})^{\mathsf{T}}\bm{x})+\eta(\omega)\mathbbm{1}_{\{(\bm{r}-\bm{\mu})^{\mathsf{T}}\bm{x}=0\}}(\o))(\bm{r}(\o)-\bm{\mu})\}=1.
\end{equation}
Indeed, the ``if" part of the claim is trivial. For the ``only if" part, suppose that \eqref{cond1} holds. Let $A\in\F$ be the event in the condition. Hence, the set
\[
T(\o):=\{t\in[-1,1]\mid \bm{\xi}(\o)=(\sgn((\bm{r}(\o)-\bm{\mu})^{\mathsf{T}}\bm{x})+t\mathbbm{1}_{\{(\bm{r}-\bm{\mu})^{\mathsf{T}}\bm{x}=0\}}(\o))(\bm{r}(\o)-\bm{\mu})\}
\]
is nonempty for every $\o\in A$. Moreover,
\[
(t,\o)\mapsto (\sgn((\bm{r}(\o)-\bm{\mu})^{\mathsf{T}}\bm{x})+t\mathbbm{1}_{\{(\bm{r}-\bm{\mu})^{\mathsf{T}}\bm{x}=0\}}(\o))(\bm{r}(\o)-\bm{\mu})
\]
is a Carath\'{e}odory function, i.e., it is continuous in $t\in[-1,1]$ and measurable in $\o\in\O$. 
For definiteness, let us set $t(\o)=0$ for $\o\in \O\setminus A$. The measurability of $\bm{\xi}$ together with \citet[Corollary~14.6, Example~14.15]{rockafellarwetsbook} imply that there exists a random variable $\eta$ such that $\eta(\o)\in T(\o)$ for almost every $\o\in\O$. Therefore, \eqref{cond2} holds for $\eta$, which yields the desired subdifferential formula for $\MAD$.
	\end{proof}

\section{Risk Parity approach with MAD}\label{sec:RPwithMAD}

The Risk Parity (RP) approach aims to choose the portfolio weights such that
the risk contributions of all assets are equal. In this section, we discuss the existence, uniqueness, and calculation of such portfolio weights.
\subsection{MAD-RP portfolios: existence and uniqueness}

The standard method for decomposing a risk measure (in the broad sense) into additive components is based on \emph{Euler's homogenous function theorem}. Since $\boldsymbol{x}\mapsto \MAD(\boldsymbol{x})$ is a homogeneous function of degree 1, if it were continuously differentiable, then we would have
\begin{equation}\label{eq:Euler}
\MAD(\boldsymbol{x})= \nabla \MAD(\boldsymbol{x})^{\mathsf{T}}\boldsymbol{x}=\sum_{i=1}^n x_i\frac{\partial \MAD(\boldsymbol{x})}{\partial x_i}.
\end{equation}
Similar to the case of volatility treated as a risk measure \citep{Maillard2010}, we could interpret the quantity
$x_{i} \frac{\partial \MAD(\boldsymbol{x})}{\partial x_{i}}$ as the risk contribution of the asset $i$ to the global MAD of the portfolio.

We need a suitable replacement of \eqref{eq:Euler} since MAD is not differentiable at some points of its domain in general. Nevertheless, by the generalized version of Euler's homogeneous function theorem for the subdifferentials of convex functions \cite[Theorem 3.1]{GenEuler}, we still have
\begin{equation}\label{eq:GenEuler}
\MAD(\boldsymbol{x})=\boldsymbol{s}^{\mathsf{T}}\boldsymbol{x},\quad \boldsymbol{s}\in \partial \MAD(\boldsymbol{x}).
\end{equation}
Given two vectors $\boldsymbol{a},\boldsymbol{b}\in\R^n$, we write $\boldsymbol{a}\cdot\boldsymbol{b}=(a_1b_1,\ldots,a_nb_n)^{\mathsf{T}}$ for their Hadamard product. Let us introduce the set
\begin{equation}\label{RCset}
\mathcal{RC}(\boldsymbol{x}):=\boldsymbol{x}\cdot\partial \MAD(\boldsymbol{x}):=\{\boldsymbol{x}\cdot \boldsymbol{s}(\boldsymbol{x})\mid \boldsymbol{s}(\boldsymbol{x})\in \partial\MAD(\boldsymbol{x})\}.
\end{equation}
Suppose that $\boldsymbol{x}\in\Delta^{n-1}$ is a portfolio vector. For each $\RC(\boldsymbol{x})=(\RC_i(\boldsymbol{x}),\ldots,\RC_n(\boldsymbol{x}))^{\mathsf{T}}\in\mathcal{RC}(\boldsymbol{x})$, \eqref{eq:GenEuler} yields that
\[
\MAD(\boldsymbol{x})=\sum_{i=1}^{n} \RC_{i}(\boldsymbol{x}).
\]
For this reason, we call $\RC(\boldsymbol{x})$ a \emph{feasible risk contribution vector} for portfolio $\boldsymbol{x}$. For instance, it is easy to verify that $\boldsymbol{x}\cdot \E[\sgn((\boldsymbol{r}-\boldsymbol{\mu})^{\mathsf{T}}\boldsymbol{x})(\boldsymbol{r}-\boldsymbol{\mu})]$ is a feasible risk contribution vector for $\boldsymbol{x}$; see Proposition~\ref{prop:subdifferentialMAD}.

%

A MAD-RP portfolio is characterized by the
requirement of having the same risk contribution for each asset. More precisely, a long-only portfolio $\boldsymbol{x}\in\Delta^{n-1}$ is called a \emph{MAD-RP portfolio} if there exists $\RC(\boldsymbol{x})\in\mathcal{RC}(\boldsymbol{x})$ such that
\begin{equation}\label{eq:MADRPcond}
  \RC_{i}(\boldsymbol{x})= \RC_{j}(\boldsymbol{x})
\end{equation}
for each $i,j\in \{1,\ldots,n\}$ with $i\neq j$.
%
Then, a long-only MAD-RP portfolio can be found by solving the system
\begin{equation}
	\left\{
	\begin{array}{ll}
		\RC_{i}(\boldsymbol{x})=\lambda, & \quad i\in\{1,\ldots,n\}, \\
		\boldsymbol{x}\in\Delta^{n-1},\\
		\RC(\boldsymbol{x})\in\mathcal{RC}(\boldsymbol{x}),\\
		\lambda \in\R.
	\end{array}%
	\right.  \label{eq:RPport1}
\end{equation}

For the results of this section, we work under the following nondegeneracy assumption.

\begin{assumption}\label{assumption}
	The only vector $\boldsymbol{x}\in\R^n_+$ such that $(\boldsymbol{r}-\boldsymbol{\mu})^{\mathsf{T}}\boldsymbol{x}=0$ $\P$-a.s. is $\boldsymbol{x}=0$.
	\end{assumption}

We discuss the mildness of Assumption~\ref{assumption} in the next two examples.

\begin{example}\label{ex1}
	Suppose that $\bm{r}-\bm{\mu}$ has distinct values $\bm{\nu}_1,\ldots,\bm{\nu}_m\in\R^n$ with respective probabilities $p_1,\ldots,p_m>0$, where $m\geq n$ and $\sum_{j=1}^m p_j=1$. In this case, the condition $(\boldsymbol{r}-\boldsymbol{\mu})^{\mathsf{T}}\boldsymbol{x}=0$ $\P$-a.s. is equivalent to the linear system $\bm{A}\bm{x}=\bm{0}_m$, where $\bm{A}\in\R^{m\times n}$ is the matrix whose respective rows are $\bm{\nu}_1^{\mathsf{T}}, \ldots, \bm{\nu}_m^{\mathsf{T}}$ and $\bm{0}_m$ is the $m$-dimensional zero vector. Then, the following are equivalent:
		\begin{enumerate}[label=(\roman*)]
			\item Assumption~\ref{assumption} holds.
			\item The only solution of the system $\bm{A}\bm{x}=\bm{0}_m, \bm{x}\in\R^n_+$ is $\bm{x}=0$.
			\item There exists $\bm{\lambda}\in\R^m$ such that $\bm{A}^{\mathsf{T}}\bm{\lambda}\in\R^n_{++}$.
			\item There exist $\lambda_1,\ldots,\lambda_m\in\R$ such that $\sum_{j=1}^m \lambda_j\bm{\nu}_j\in\R^n_{++}$.
			\end{enumerate}
		 Here, the equivalence between (ii) and (iii) is by Gordan's Alternative Theorem; see \citet{gordan} and \citet[Theorem 3.14]{guler}. In particular, Assumption~\ref{assumption} holds if the rank of $\bm{A}$ is $n$, i.e., there are $n$ linearly independent vectors among $\bm{\nu}_1,\ldots,\bm{\nu}_m$.
	\end{example}

\begin{example}\label{ex2}
	Suppose that $\bm{r}-\bm{\mu}$ has the multivariate centered Gaussian distribution with covariance matrix $\bm{\Sigma}\in\R^{n\times n}$. Then, for every $\bm{x}\in\R^n$, $(\boldsymbol{r}-\boldsymbol{\mu})^{\mathsf{T}}\boldsymbol{x}$ has the univariate centered Gaussian distribution with variance $\bm{x}^{\mathsf{T}}\bm{\Sigma}\bm{x}$. In this case, the condition $(\boldsymbol{r}-\boldsymbol{\mu})^{\mathsf{T}}\boldsymbol{x}=0$ $\P$-a.s. is equivalent to $\bm{x}^{\mathsf{T}}\bm{\Sigma}\bm{x}=0$. Hence, Assumption~\ref{assumption} holds if and only if $\bm{\Sigma}$ is strictly copositive. In particular, Assumption~\ref{assumption} holds if $\bm{\Sigma}$ is positive definite, i.e., $\bm{\Sigma}$ has full rank or, equivalently, $\bm{\Sigma}$ is nonsingular.
	\end{example}




The next proposition establishes the existence and uniqueness of a MAD-RP portfolio.

\begin{proposition}\label{RPMAD-unique}
	 Suppose that Assumption~\ref{assumption} holds. 
	\begin{enumerate}[label=(\roman*)]
		\item Let $\lambda>0$ and consider the problem
		\begin{equation}
			\begin{array}{rl}
				\displaystyle\min_{\boldsymbol{x}\in\R^n_{++}} & \MAD(\boldsymbol{x})-\lambda\displaystyle\sum_{i=1}^{n}\ln x_{i}.\\
			\end{array}%
			\label{MADlambda}
		\end{equation}
		Then, there exists a unique optimal solution of this problem. 
	 \item There exists a unique MAD-RP portfolio, that is, the system \eqref{eq:RPport1} has a unique solution.
	 \end{enumerate}
	\end{proposition}

\begin{proof}
	(i) The supposition ensures that $\MAD(\boldsymbol{x})=\E[|(\boldsymbol{r}-\boldsymbol{\mu})^{\mathsf{T}}\boldsymbol{x}|]>0$ for every $\boldsymbol{x}\in\R^n_+\setminus\{0\}$, that is, $\MAD$ is a positive function as defined in \citet[Section 2]{Cesarone2019Anoptimization}. Moreover, $\MAD$ is also convex and positively homogeneous. Although $\MAD$ is not continuously differentiable at some points of its domain, the same arguments as in the proof of \citet[Theorem 2]{Cesarone2019Anoptimization} are applicable, and it follows that the given problem has a unique optimal solution.
	
	(ii) Let us fix some $\lambda>0$. By (i), there exists a unique optimal solution $\bar{\bm{x}}$ of the problem in \eqref{MADlambda}. By \citet[Proposition 2]{Cesarone2019Anoptimization}, $\boldsymbol{x}^\ast:=\frac{\bar{\boldsymbol{x}}}{\sum_{i=1}^n \bar{x}_i}$ is the unique optimal solution of the problem in \eqref{MADlambda} but with $\lambda$ replaced with $\lambda^\ast:=\frac{\lambda}{\sum_{i=1}^n \bar{x}_i}$. By the first order condition for optimality, we have
	\begin{equation}\label{eq:FOC}
	0\in \partial\MAD(\boldsymbol{x}^\ast)-\lambda^\ast \sqb{\frac{1}{x_i^\ast}}_{i=1}^n,
	\end{equation}
	that is,
	\[
	(\lambda^\ast,\ldots,\lambda^\ast)^{\mathsf{T}} \in \boldsymbol{x}^\ast\cdot \partial \MAD(\boldsymbol{x}^\ast).
	\]
	Hence, there exists $\boldsymbol{s}(\boldsymbol{x}^\ast)\in \partial \MAD(\boldsymbol{x}^\ast)$ such that $x^{\ast}_i s_i(\boldsymbol{x}^\ast)=\lambda^\ast$ for every $i\in\{1,\ldots,n\}$. In particular, $\RC(\boldsymbol{x}):=\boldsymbol{x}^\ast\cdot \boldsymbol{s}(\boldsymbol{x}^\ast)\in \mathcal{RC}(\boldsymbol{x}^\ast)$ and $\RC_i(\boldsymbol{x})=\lambda^\ast$ for every $i\in\{1,\ldots,n\}$. Therefore, $\boldsymbol{x}^\ast$ is a MAD-RP portfolio. Its uniqueness follows from the strict convexity of the logarithmic objective function in \eqref{MADlambda}.
	\end{proof}

\begin{remark}\label{RPMAD-rem}
	In \citet[Theorem~2]{Cesarone2019Anoptimization}, the existence and uniqueness of an RP portfolio are shown for a continuously differentiable risk measure. Their proof has two main components: 1) arguments based on coercivity and Weierstrass theorem to prove that the associated logarithmic problem (similar to \eqref{MADlambda}) has a unique optimal solution, where the positivity, convexity, and positive homogeneity of the risk measure are used but its differentiability is not used, 2) their Proposition~1, where it is shown that an optimal solution of the logarithmic problem gives rise to an RP portfolio and the proof uses first order conditions for the minimization of differentiable functions. In the present paper, 1) still works when the risk measure is replaced with MAD, as stated in the proof of Proposition~\ref{RPMAD-unique}(i). However, in the proof of Proposition~\ref{RPMAD-unique}(ii), we extend 2) by using the more general first order condition based on subdifferentials.
	\end{remark}

Assuming that the portfolio MAD is additive (see Theorem \ref{th:AddGen}(iv)), i.e., the asset returns satisfy the condition
\begin{equation}\label{eq:add4}
	(r_i-\mu_i)(r_j-\mu_j)\geq 0 \quad \P\mbox{-a.s.}
	\end{equation}
for every $i,j\in\{1,\ldots,n\}$ with $i\neq j$, we show in the next proposition that the weights of the assets in the MAD-RP portfolio are proportional to the reciprocals of the MADs of the individual asset returns.

We point out that taking a portfolio with weights proportional to the inverse of the risks of the individual asset returns is a na\"{\i}ve approach frequently used in practice. This approach is often called \emph{na\"{\i}ve risk parity} due to the strong implicit assumption on the dependence of the asset returns \citep[see, e.g.,][]{qian2011risk,Qian2017,clarke2013risk,haesen2017enhancing}.

\begin{proposition}[Closed-form solution for the long-only MAD-RP portfolio]\label{RPClosedForm}
Suppose that Assumption~\ref{assumption} holds. Furthermore, suppose that the additivity condition \eqref{eq:add4} holds. 
Then, the unique MAD-RP portfolio is given by
	\begin{equation}
	x_{i}=\displaystyle\frac{\E[\abs{r_{i}-\mu_{i}}]^{-1}}{\sum_{j=1}^{n}\E[ \abs{r_{j}-\mu_{j}}]^{-1}},\quad i\in\{1,\ldots,n\}.
	\label{eq:xRPMADAdditive}
	\end{equation}
	
	\begin{proof}
The first supposition ensures that there is a unique MAD-RP portfolio by Proposition \ref{RPMAD-unique}. Moreover, as checked in the proof of Proposition \ref{RPMAD-unique}, we have $\MAD(r_i)=\E[|r_i-\mu_i|]>0$ for each $i\in\{1,\ldots,n\}$. Hence, the expression in \eqref{eq:xRPMADAdditive} is well-defined.

Under \eqref{eq:add4}, MAD is additive for long-only portfolios by Theorem \ref{th:AddGen}. Hence, we have
\begin{equation}\label{eq:MADadd}
\MAD(\boldsymbol{x})= \sum_{i=1}^{n} x_{i} \MAD(r_{i})
\end{equation}
for every $\boldsymbol{x}\in\R^n_+$, i.e., $\MAD$ is linear on $\R^n_+$. It follows that $\MAD$ is differentiable at every $\boldsymbol{x}\in\R^n_{++}$ and the gradient is given by $\nabla \MAD(\boldsymbol{x})=(\MAD(r_1),\ldots,\MAD(r_n))^{\mathsf{T}}$. Hence, $\mathcal{RC}(\boldsymbol{x})$ is a singleton and the only feasible risk contribution vector is given by
\[
\RC(\boldsymbol{x})=(x_1\MAD(r_1),\ldots,x_n\MAD(r_n))^{\mathsf{T}}
\]
for each $\boldsymbol{x}\in\R^n_{++}$. 
Imposing condition \eqref{eq:MADRPcond}, we can write
		\begin{equation}
		x_{i} \E[ \abs{r_{i}-\mu_{i}}] = x_{j} \E[ \abs{r_{j}-\mu_{j}}], \quad i,j\in\{1,\ldots,n\}.
		\end{equation}
Thus, for a fixed $i\in\{1,\ldots,n\}$, considering the full investment constraint $\sum_{j=1}^{n}x_{j}=1$, we obtain
\[
\sum_{j=1}^n x_j =\sum_{j=1}^n \frac{\E[\abs{r_i-\mu_i}]}{\E[\abs{r_j-\mu_j}]}x_i=\E[\abs{r_i-\mu_i}]x_i\sum_{j=1}^n \E[\abs{r_j-\mu_j}]^{-1}=1.
\]
From this, \eqref{eq:xRPMADAdditive} follows immediately. Hence, the portfolio given in \eqref{eq:xRPMADAdditive} is a MAD-RP portfolio, and this is the only such portfolio as established above.
\end{proof}
\end{proposition}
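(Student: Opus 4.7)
The plan is to combine the existence/uniqueness result of Proposition~\ref{RPMAD-unique} with the additivity characterization of Theorem~\ref{th:AddGen}, and then solve the resulting simple linear system. The first supposition will be used twice: once to invoke Proposition~\ref{RPMAD-unique} (so that a unique MAD-RP portfolio exists), and once, by applying it to the standard basis vectors $\boldsymbol{e}_i\in\R^n_+$, to conclude that $\MAD(r_i)=\E[|r_i-\mu_i|]>0$ for each $i\in\{1,\ldots,n\}$, which guarantees that the closed-form expression in \eqref{eq:xRPMADAdditive} is well defined.

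Next, I would exploit the additivity condition \eqref{eq:add4}. By Theorem~\ref{th:AddGen}, condition \eqref{eq:add4} is equivalent to the identity $\MAD(\sum_{i=1}^n x_i r_i)=\sum_{i=1}^n x_i\MAD(r_i)$ for every $\boldsymbol{x}\in\R^n_+$. In other words, on the cone $\R^n_+$ the map $\boldsymbol{x}\mapsto\MAD(\boldsymbol{x})$ agrees with the linear functional $\boldsymbol{x}\mapsto\sum_{i=1}^n x_i\MAD(r_i)$. Since the two convex functions $\MAD$ and this linear functional coincide on an open set (the interior $\R^n_{++}$), MAD must be differentiable on $\R^n_{++}$ with constant gradient $\nabla\MAD(\boldsymbol{x})=(\MAD(r_1),\ldots,\MAD(r_n))^{\mathsf{T}}$. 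Consequently, for $\boldsymbol{x}\in\R^n_{++}$ the set $\partial\MAD(\boldsymbol{x})$ is the singleton containing this gradient, so the feasible risk contribution set reduces to $\mathcal{RC}(\boldsymbol{x})=\{(x_1\MAD(r_1),\ldots,x_n\MAD(r_n))^{\mathsf{T}}\}$.

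With this in hand, the RP condition \eqref{eq:MADRPcond} collapses to the scalar equations $x_i\MAD(r_i)=x_j\MAD(r_j)$ for all $i,j$. Fixing an index $i$, this yields $x_j=\frac{\MAD(r_i)}{\MAD(r_j)}x_i$ for each $j$, and imposing the full-investment constraint $\sum_{j=1}^n x_j=1$ gives
\[
1=\MAD(r_i)\,x_i\sum_{j=1}^n \MAD(r_j)^{-1},
\]
from which \eqref{eq:xRPMADAdditive} follows immediately after rearranging. This portfolio lies in $\Delta^{n-1}$ and in fact in the open simplex, so the earlier derivation via $\R^n_{++}$ is valid, and uniqueness is guaranteed by Proposition~\ref{RPMAD-unique}.

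I do not anticipate a genuinely hard step here: the proof is essentially a bookkeeping argument once one recognizes that additivity plus positive homogeneity makes MAD linear on $\R^n_+$, collapsing the subdifferential to a single gradient. The only subtle point is justifying that the set-valued RP condition \eqref{eq:MADRPcond} reduces to a point-valued equation; this is precisely why I would emphasize the reduction $\mathcal{RC}(\boldsymbol{x})=\{\boldsymbol{x}\cdot\nabla\MAD(\boldsymbol{x})\}$ before writing down the linear system.
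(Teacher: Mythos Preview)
Your proposal is correct and follows essentially the same approach as the paper's own proof: invoke Proposition~\ref{RPMAD-unique} for existence and uniqueness, use Theorem~\ref{th:AddGen} to linearize $\MAD$ on $\R^n_+$, collapse $\mathcal{RC}(\boldsymbol{x})$ to a singleton via differentiability on $\R^n_{++}$, and solve the resulting linear system under the budget constraint. Your explicit use of the standard basis vectors to extract $\MAD(r_i)>0$ is a nice touch that makes the well-definedness step slightly more transparent than the paper's reference back to the proof of Proposition~\ref{RPMAD-unique}.
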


\subsection{MAD-RP portfolios: calculations}\label{sec:formulations}

In the following sections, we provide three methods for finding MAD-RP portfolios in practice. For these methods, as usual in portfolio optimization, we use a lookback approach where the possible realizations of the discrete random returns arise from historical data. This will ensure that our optimization problems are finite-dimensional. Let $T\in\N$ be the length of the time series of the prices of the assets. For each $i\in\{1,\ldots,n\}$ and $t\in\{0,\ldots,T\}$, we denote by $p_{it}$ the realized price of asset $i$ at time $t$, and for $t\geq 1$, we denote by $r_{it}=\frac{p_{it}-p_{i(t-1)}}{p_{i(t-1)}}$ the realized (linear) return of asset $i$ for the period ending at $t$. We assume that there are no ties of the outcomes and, therefore, that each historical scenario is equally likely with probability $\frac{1}{T}$ \citep[see, e.g.,][and references therein]{carleo2017approximating,cesarone2020computational}. In particular, for each $i\in\{1,\ldots,n\}$, the random return $r_i$ of asset $i$ is a discrete random variable with the uniform distribution over the set $\{r_{i1},\ldots,r_{iT}\}$ and we have $\mu_i=\frac{1}{T}\sum_{t=1}^T r_{it}$.


Furthermore, we assume that the only vector $\boldsymbol{x}\in\R^n_+$ such that $(\boldsymbol{r}-\boldsymbol{\mu})^{\mathsf{T}}\boldsymbol{x}=0$ $\P$-a.s. is $\boldsymbol{x}=0$, which guarantees the existence and uniqueness of an MAD-RP portfolio by Proposition \ref{RPMAD-unique}.

\subsubsection{Logarithmic formulations}\label{logf}

The first formulation that we present consists of minimizing MAD with a logarithmic barrier term in the objective function \citep{bai2016least} as already given in \eqref{MADlambda}:
\begin{equation}
\left\{
\begin{array}{rl}
\displaystyle\min_{\boldsymbol{x}\in\R^n_{++}} & \MAD(\boldsymbol{x})-\lambda\displaystyle\sum_{i=1}^{n}\ln x_{i},\\
\end{array}%
\right.  \label{LambdaObj}
\end{equation}
where $\lambda>0$ is a constant. We name this strictly convex optimization problem \textbf{log\_obj}. Suppose that \textbf{log\_obj} has an optimal solution $\bar{\boldsymbol{x}}$ (due to the strict convexity of the objective function, it must be the unique optimal solution).
Then, following the arguments in the proof of Proposition \ref{RPMAD-unique}, the \emph{normalized} portfolio $\boldsymbol{x}^\ast=\frac{\bar{\boldsymbol{x}}}{\sum_{i=1}^{n}\bar{x}_{i}}$ is the MAD-RP portfolio and it is the unique optimal solution of \textbf{log\_obj} with $\lambda$ replaced with $\lambda^\ast:=\frac{\lambda}{\sum_{i=1}^n \bar{x}_i}$.

The second formulation has a logarithmic constraint and we call it \textbf{log\_constr} 
\citep[see][]{Spinu2013,cesarone2018minimum,bellini2021risk}. 
It consists of solving the following convex problem:
\begin{equation}
\left\{
\begin{array}{rl}
\displaystyle\min_{\boldsymbol{x}\in\R^n_{++}} & \MAD(\boldsymbol{x})\\
&   \displaystyle\sum_{i=1}^{n}\ln x_{i}\geq c,
\end{array}
\right.  \label{LogConstr}
\end{equation}
where $c\in\R$ is a constant.
%
%
Moreover, if $\bar{\boldsymbol{x}}$ is an optimal solution of \textbf{log\_constr}, then $\frac{\bar{\boldsymbol{x}}}{\sum_{i=1}^{n}\bar{x}_{i}}$ is the MAD-RP as checked in the next remark.

\begin{remark}
  Since the function $\boldsymbol{x}\mapsto\MAD(\boldsymbol{x})$ is not differentiable everywhere, we can apply the general Karush-Kuhn-Tucker conditions with subdifferentials as formulated in \citet[Theorem 28.3]{rockafellar1970convex}. Note that the Lagrangian of \textbf{log\_constr} is given by
\begin{equation}
L(\boldsymbol{x}, \lambda)=\MAD(\boldsymbol{x})+\lambda\biggl(c-\sum_{i=1}^{n}\ln(x_{i}) \biggr)
\end{equation}
for each $\boldsymbol{x}\in\R^n_{++}$, $\lambda\geq 0$. Then, for a fixed $\lambda\geq 0$, the subdifferential of $\boldsymbol{x}\mapsto L(\boldsymbol{x},\lambda)$ at $\boldsymbol{x}\in\R^n_{++}$ is given by
\[
\partial_{\boldsymbol{x}}L(\boldsymbol{x},\lambda)=\partial\MAD(\boldsymbol{x})-\lambda \sqb{\frac{1}{x_i}}_{i=1}^n.
\]
In particular, the problem $\inf_{\boldsymbol{x}\in\R^n_{++}} L(\boldsymbol{x},\lambda)$ of calculating the dual objective function at a given $\lambda> 0$ is equivalent to \textbf{log\_obj}. Let $\bar{\boldsymbol{x}}$ be an optimal solution of \textbf{log\_constr} with $\bar{\lambda}\geq 0$ denoting the Lagrange multiplier of the logarithmic constraint at optimality. Similar to \eqref{eq:FOC}, we have $0\in \partial_{\boldsymbol{x}}L(\bar{\boldsymbol{x}},\bar{\lambda})$. Arguing as in the proof of Proposition \ref{RPMAD-unique}, we obtain the existence of $\boldsymbol{s}(\bar{\boldsymbol{x}})\in\partial\MAD(\bar{\boldsymbol{x}})$ such that $\bar{x}_is_i(\bar{\boldsymbol{x}})=\bar{\lambda}$ for every $i\in\{1,\ldots,n\}$. In particular, $\bar{\boldsymbol{x}}\cdot \boldsymbol{s}(\bar{\boldsymbol{x}})\in\RC(\bar{\boldsymbol{x}})$ and $\MAD(\bar{\boldsymbol{x}})=n\bar{\lambda}$. Since $\MAD(\bar{\boldsymbol{x}})>0$ by the assumption stated at the beginning of this section, we have $\bar{\lambda}>0$. On the other hand, by complementary slackness, we have $\bar{\lambda}(c-\sum_{i=1}^n\ln(\bar{x}_i))=0$. Hence, $c=\sum_{i=1}^n \ln(\bar{x}_i)$. It follows that the normalized version $\boldsymbol{x}^\ast:=\frac{\bar{\boldsymbol{x}}}{\sum_{i=1}^n \bar{x}_i}$ is the MAD-RP portfolio and it is an optimal solution of \textbf{log\_constr} with $c$ replaced with $c^\ast:=\sum_{i=1}^n \ln(x^\ast_i)=\sum_{i=1}^n \ln(\bar{x}_i)-n\ln(\sum_{i=1}^n \bar{x}_i)=c-n\ln(\sum_{i=1}^n \bar{x}_i)$.

\end{remark}

Formulation \eqref{LogConstr} allows us to obtain some features of the Risk Parity portfolio $\boldsymbol{x}^{MAD-RP}$ related to a global minimizer $\boldsymbol{x}^{MinMAD}$ of MAD (i.e., the MinMAD portfolio) and the EW portfolio $\boldsymbol{x}^{EW}$, as shown in Remarks \ref{rem:MinMADvsMAD-ERC} and \ref{rem:EWvsMAD-ERC}, respectively.

\begin{remark}[MinMAD vs. MAD-RP] \label{rem:MinMADvsMAD-ERC}

From the formulation \textbf{log\_constr} in \eqref{LogConstr}, it is clear that
\[
\MAD(\boldsymbol{x}^{MinMAD})\leq \MAD(\boldsymbol{x}^{MAD-RP}).
\]
\end{remark}

\begin{remark}[EW vs. MAD-RP] \label{rem:EWvsMAD-ERC}
Let us consider a slightly modified version of Problem \eqref{LogConstr},
where we add the budget constraint $\displaystyle\sum_{i=1}^{n} x_{i}=1$.
Due to Jensen inequality, we have
\begin{equation}\label{JensenRP}
\displaystyle\frac{1}{n}\sum_{i=1}^{n}\ln(x_{i})\leq \ln\biggl(\frac{1}{n}\sum_{i=1}^{n}x_{i}\biggr) = - \ln(n)
\end{equation}
for every feasible solution $\boldsymbol{x}$ of the new problem. Now, fixing $c=-n\ln(n)$ in Problem \eqref{LogConstr}, the unique feasible solution is given by $x_{i}=\frac{1}{n}$, i.e., $\boldsymbol{x}=\boldsymbol{x}^{EW}$. As a consequence, we have
$$\MAD(\boldsymbol{x}^{MAD-RP})\leq \MAD(\boldsymbol{x}^{EW}).$$
\end{remark}

\subsubsection{System-of-equation formulations}\label{soef}

An alternative method, named \textbf{soe\_1}, consists of solving the system \eqref{eq:RPport1} directly. Following the expression of the set $\mathcal{RC}(\boldsymbol{x})$ in \eqref{RCset} and that of the subdifferential $\partial \MAD(\boldsymbol{x})$ in Proposition~\ref{prop:subdifferentialMAD}, we can rewrite this system as
\begin{equation}
\negthinspace\negthinspace\negthinspace\negthinspace\negthinspace\left\{\negthinspace
\begin{array}{ll}
\frac{x_{i}}{T}\sum_{t=1}^T s_t(r_{it}-\mu_{i}) = \displaystyle \lambda, & i\in \{1,\ldots,n\},\\
\sum\limits_{i=1}^{n}x_{i}=1, &  \\
x_{i}\geq 0, & i\in\{1,\ldots,n\}, \\
\sum_{i=1}^n (r_{it}-\mu_{i})x_i\neq 0 \Rightarrow s_t=\sgn(\sum_{i=1}^n (r_{it}-\mu_{i})x_i), & t\in \{1,\ldots,T\},\\
-1\leq s_t\leq +1,& t\in\{1,\ldots,T\}, \\
\lambda\in\R, \; x_i\in\R,\; s_t\in \R,\; & i\in\{1,\ldots,n\}, t\in\{1,\ldots,T\}.\\
\end{array}%
\right.  \label{eq:SoE1}
\end{equation}
Note that $s_t$ is an auxiliary decision variable in this formulation and it stands for a subgradient of the absolute value function at the point $\sum_{i=1}^n (r_{it}-\mu_{i})x_i$. The logical implication constraint in the above formulation can be converted into linear inequalities by introducing additional variables. We rewrite $s_t$ as a convex combination of $+1$ and $-1$, i.e., $s_t=\alpha_t-(1-\alpha_t)=2\alpha_t-1$ for some variable $\alpha_t$ with values in $[0,1]$, where $\alpha_t$ is enforced to be $1$ when $\sum_{i=1}^n (r_{it}-\mu_{i})x_i>0$, and it is enforced to be $0$ when $\sum_{i=1}^n (r_{it}-\mu_{i})x_i<0$. This is achieved by the following formulation with the help of two additional binary decision variables:
\begin{equation}
\left\{
\begin{array}{ll}
\frac{x_{i}}{T} \sum_{t=1}^T (2\alpha_t-1)(r_{it}-\mu_{i}) = \displaystyle \lambda, & \quad i\in \{1,\ldots,n\},\\
\sum\limits_{i=1}^{n}x_{i}=1, & \\
x_{i}\geq 0, &\quad i\in\{1,\ldots,n\}, \\
-Mu_t\leq \sum_{i=1}^n (r_{it}-\mu_{i})x_i \leq Mv_t,&\quad t\in\{1,\ldots,T\},\\
1-M(1-v_t)\leq \alpha_t\leq M(1-u_t),&\quad t\in\{1,\ldots,T\},\\
0\leq \alpha_t\leq 1,&\quad t\in\{1,\ldots,T\}, \\
u_t,v_t\in\{0,1\},&\quad t\in\{1,\ldots,T\},\\
\lambda\in\R; \; x_i\in\R;\; \alpha_t,u_t,v_t\in\R,&\quad i\in\{1,\ldots,n\},\ t\in\{1,\ldots,T\},
\end{array}%
\right.  \label{eq:SoEnew1}
\end{equation}
where $M>0$ is a suitably large constant.
%
%
This reformulation is processed numerically using the global optimization software BARON \citep{tawarmalani2005polyhedral}.
\begin{remark}\label{rem:simplified}
	As a simplification of \eqref{eq:SoEnew1}, one may consider the following system:
	\begin{equation}
	\left\{
	\begin{array}{ll}
	\frac{x_{i}}{T} \sum_{t=1}^T (v_t-u_t)(r_{it}-\mu_{i}) = \displaystyle \lambda, & \quad i\in \{1,\ldots,n\},\\	
	\sum\limits_{i=1}^{n}x_{i}=1, &  \\
	x_{i}\geq 0, & \quad i\in\{1,\ldots ,n\},\\
	-Mu_t\leq \sum_{i=1}^n (r_{it}-\mu_{i})x_i \leq M v_t,&\quad t\in\{1,\ldots,T\},\\	
	v_t + u_t \leq 1,&\quad t\in\{1,\ldots,T\},\\
	u_t,v_t\in \{0,1\},&\quad t\in\{1,\ldots,T\},\\
	\lambda\in\R;\; x_i\in\R;\; u_t,v_t\in \R,&\quad i\in\{1,\ldots,n\},\ t\in\{1,\ldots,T\}.
	\end{array}%
	\right.  \label{eq:SoEnew2}
	\end{equation}
	In this system, $v_t-u_t$ stands for a subgradient of the absolute value function at a point where we have $\sum_{i=1}^n (r_{it}-\mu_{i})x_i $; however, it is restricted to the set $\{-1,0,+1\}$ when $\sum_{i=1}^n (r_{it}-\mu_{i})x_i =0$. Hence, for some corner cases, the system \eqref{eq:SoEnew2} may be infeasible even though the MAD-RP portfolio exists. On the other hand, if $\boldsymbol{x}^\ast$ is part of a solution of \eqref{eq:SoEnew2}, then it is the MAD-RP portfolio. For this reason, we will use \eqref{eq:SoEnew2} in the empirical analysis of Section \ref{sec:EmpirCompRes} when implementing \textbf{soe\_1}.
	\end{remark}

%
Finally, we also consider the following formulation, which we call \textbf{soe\_2} and consists of a further simplified version of the system \eqref{eq:RPport1}
\begin{equation}
		\left\{
		\begin{array}{ll}
			q_{i}^{2} \E\biggl[ \sgn\biggl[ \sum_{j=1}^{n}(r_{j}-\mu_{j})q_{j}^{2}\biggr](r_{i}-\mu_{i})\biggr] = \displaystyle \frac{z}{n}, & \quad i\in\{1,\ldots,n\},\\
			\sum\limits_{i=1}^{n}q_{i}^{2}=1, &
		\end{array}%
		\right.  \label{eq:SoE2old}
	\end{equation}
where we remove the sign restrictions for $\boldsymbol{x}$.
%

\subsubsection{Least-squares formulations}\label{lsf}

The last method, called \textbf{ls\_rel}, exploits the fact that
we have $\frac{\RC_{i}(\boldsymbol{x})}{\MAD(\boldsymbol{x})}=\frac{1}{n}$ for each $i\in\{1,\ldots,n\}$ whenever $\boldsymbol{x}$ is the MAD-RP portfolio \citep[see][]{cesarone2018minimum}.
More precisely, it consists of the following least-squares formulation:
\begin{equation}
\left\{
\begin{array}{rll}
\displaystyle\min & \displaystyle\sum_{i=1}^{n}\biggl(\frac{\RC_{i}}{\MAD(\boldsymbol{x})} -\frac{1}{n}\biggr)^{2} & \\
& \boldsymbol{\RC}\in \mathcal{RC}(\boldsymbol{x}),&\\
&   \sum\limits_{i=1}^{n}x_{i}=1, &  \\
& x_{i}\geq 0, & \quad i\in \{1,\ldots ,n\},\\
&\boldsymbol{x}\in\R^n,\; \boldsymbol{\RC}\in \R^n.
\end{array}%
\right.  \label{RPLQRel}
\end{equation}

A variant of this least-squares method is given by
\begin{equation}
\left\{
\begin{array}{rll}
\displaystyle\min& \displaystyle\sum_{i=1}^{n}\sum_{j=1}^n\bigl(\RC_{i}-\RC_{j}\bigr)^{2} & \\
& \boldsymbol{\RC}\in \mathcal{RC}(\boldsymbol{x}),&\\
&   \sum\limits_{i=1}^{n}x_{i}=1, &  \\
& x_{i}\geq 0, & \quad i\in\{ 1,\ldots ,n\},\\
& \boldsymbol{x}\in\R^n,\; \boldsymbol{\RC}\in \R^n,
\end{array}%
\right.  \label{RPLQAbs}
\end{equation}
named \textbf{ls\_abs} \citep[see, e.g.,][]{Maillard2010}.
The functional constraint $\boldsymbol{\RC}\in \mathcal{RC}(\boldsymbol{x})$ can be reformulated in terms of linear inequalities with additional binary variables as in Section \ref{soef}. If an optimal solution for \textbf{ls\_rel} or \textbf{ls\_abs} is found with zero optimal value, then this solution is the MAD-RP portfolio. Nevertheless, numerical methods can only guarantee local optimality due to the nonconvex nature of these problems. Hence, using the least-squares formulations should be considered as a heuristic approach.

\section{Empirical analysis} \label{sec:EmpirCompRes}

In this section, we provide an extensive empirical analysis based on three investment universes which are described in Section \ref{sec:datasets}.
More precisely,
in Section \ref{sec:AccuEffi}, we test and compare all the MAD-RP formulations in terms of accuracy and efficiency.
Section \ref{sec:insample} shows some graphical examples of the portfolio selection strategies analyzed in terms of portfolio weights and relative risk distribution among assets, using both volatility and MAD as risk measures.
In Section \ref{sec:out-of-sample}, we report and discuss their out-of-sample performance results based on a Rolling Time Window scheme of evaluation.

\subsection{Datasets} \label{sec:datasets}

We give here a brief description of the three real-world datasets used in the empirical analysis.
These datasets were obtained from Refinitiv and are publicly available on \\
\href{https://www.francescocesarone.com}{https://www.francescocesarone.com}.
\begin{itemize}
	
	\item \emph{ETF Emerging countries} (ETF-EC): it consists of the total return index\footnote{It consists in the price of the asset including dividends, assuming these dividends are reinvested in the company. \\
See, e.g., https:\/\/www.msci.com\/eqb\/methodology\/meth\_docs\/MSCI\_May12\_IndexCalcMethodology.pdf}, expressed in US dollars, of the ETFs of 24 emerging countries.
	
	\item \emph{Euro bonds} (EuroBonds): it consists of the total return index in euros of the government bonds of 11 countries belonging to the Eurozone, with maturities ranging from 1 year to 30 years.
	
	\item \emph{Commodities and Italian bonds mix} (CIB-mix): it consists of the total return index, expressed in euros, of a mixture of Italian government bonds, whose maturity ranges from 2 to 30 years, and commodities (agriculture, gold, energy and industrial metals).
	
\end{itemize}

In Table \ref{tab:Datasets}, we provide additional information about these datasets.
\begin{table}[htbp]
	\centering
	\vspace{5pt}
	{\small{\begin{tabular}{r|ccc}
				Dataset & \# Assets & Days & From-To \\\hline
				ETF-EC & 24 & 1042 & 1/2015-12/2018\\
				EuroBonds & 62 & 1564 & 1/2013-12/2018\\
				CIB-mix & 11 & 1564 & 1/2013-12/2018\\
				\hline
			\end{tabular}}}
			\caption{List of the datasets analyzed.}
			\label{tab:Datasets}
		\end{table}

		\subsection{Computational results}
		
		\subsubsection{Accuracy and efficiency of the MAD-RP formulations} \label{sec:AccuEffi}
		
		In this section, we test and compare all the MAD-RP formulations, described in Section
		\ref{sec:formulations}, in terms of accuracy and efficiency using the
		following quantities:
		\begin{itemize}
			\item the square root of the value of the objective function of Problem \eqref{RPLQRel}, named $\sqrt{F(\boldsymbol{x})}$;
			\item the portfolio $\MAD$ obtained from each method, named $\MAD(\boldsymbol{x})$;
			\item the empirical mean of the absolute deviations of relative risk contributions from $\frac{1}{n}$,
			$$\mbox{MeanAbsDev} = \frac{1}{n}\sum_{i=1}^{n}\abs{ \frac{\RC_{i}(\boldsymbol{x})}{\MAD(\boldsymbol{x})}-\frac{1}{n}} \, ;$$
			\item the maximum of the absolute deviations of relative risk contributions from $\frac{1}{n}$,
			$$\mbox{MaxAbsDev} = \max_{1\leq i\leq n}\abs{ \frac{\RC_{i}(\boldsymbol{x})}{\MAD(\boldsymbol{x})}-\frac{1}{n}} \, ;$$
			\item the inverse of the number of assets selected, namely $\frac{1}{n}$;
			\item the computational times (in seconds) required to solve each MAD-RP formulation considered.

		\end{itemize}
		%
		All the experiments have been implemented on a workstation with Intel Core
		CPU (i7-6700, 3.4 GHz, 16 Gb RAM) under MS Windows 10, using MATLAB 9.1.
More precisely, we solve models \textbf{log\_obj} \eqref{LambdaObj}, \textbf{log\_constr} \eqref{LogConstr}, \textbf{ls\_rel} \eqref{RPLQRel}, \textbf{ls\_abs} \eqref{RPLQAbs} by means of the built-in function \texttt{fmincon}; model \textbf{soe\_1} \eqref{eq:SoEnew2}
by using the global optimization software BARON (version 15.9.22), which is also called from MATLAB
by means of the MATLAB/BARON toolbox \citep{tawarmalani2005polyhedral};
finally, we solve \textbf{soe\_2} \eqref{eq:SoE2old} by means of the built-in function \texttt{fsolve}.
%
\begin{table}[htbp]
  \centering
  \caption{Experimental results for the MAD-RP methods using ETF-EC, where $1/n = 0.0417$.}
  \medskip
  \scalebox{0.9}{
    \begin{tabular}{|c|c|c|c|c|c|}
    \toprule
    \textbf{Formulation} & $\boldsymbol{\sqrt{F(x)}}$ & \textbf{MAD(x)} & \textbf{MeanAbsDev} & \textbf{MaxAbsDev} & \textbf{Time (secs.)} \\
    \midrule
    \textbf{ls\_rel} & 3.33e-03 & 6.25e-03 & 5.19e-04 & 1.74e-03 & 4.76 \\
    \midrule
    \textbf{ls\_abs} & 2.52e-03 & 6.24e-03 & 4.42e-04 & 9.37e-04 & 4.74 \\
    \midrule
    \textbf{log\_constr} & 1.49e-03 & 6.23e-03 & 1.96e-04 & 8.36e-04 & 5.92 \\
    \midrule
    \textbf{log\_obj} & 1.49e-03 & 6.23e-03 & 1.95e-04 & 8.36e-04 & 2.51 \\
    \midrule
    \textbf{soe\_1} & -     & -     & -     & -     & $>$ 1 day \\
    \midrule
    \textbf{soe\_2} & 7.87e-03 & 6.29e-03 & 8.16e-04 & 7.28e-03 & 37.08 \\
    \bottomrule
    \end{tabular}%
    }
  \label{tab:RPMADETF-EC}%
\end{table}%

Since in Section \ref{sec:out-of-sample} we perform an extensive out-of-sample analysis using in-sample windows of 2 years, for the following accuracy and efficiency analysis of the MAD-RP formulations we set the number of observations $T=500$ days (i.e., the first two years of each dataset).
For Problems \eqref{LambdaObj} and \eqref{LogConstr}, we also perform a sensitivity analysis by varying the constants $\lambda >0$ and $c \in \R$, respectively.
Such a sensitivity analysis leads us to set $\lambda = 0.001$ and $c = -40$, which guarantee a reasonable level of accuracy for all experiments.
%
%
\begin{table}[htbp]
  \centering
  \caption{Experimental results for the MAD-RP methods using EuroBonds, where $1/n = 0.0161$.}
  \medskip
  \scalebox{0.9}{
    \begin{tabular}{|c|c|c|c|c|c|}
    \toprule
    \textbf{Formulation} & $\boldsymbol{\sqrt{F(x)}}$ & \textbf{MAD(x)} & \textbf{MeanAbsDev} & \textbf{MaxAbsDev} & \textbf{Time (secs.)} \\
    \midrule
    \textbf{ls\_rel} & 2.54e-04 & 6.48e-04 & 2.36e-05 & 1.11e-04 & 62.18 \\
    \midrule
    \textbf{ls\_abs} & 5.19e-04 & 6.50e-04 & 4.87e-05 & 2.36e-04 & 65.22 \\
    \midrule
    \textbf{log\_constr} & 8.94e-05 & 6.47e-04 & 7.43e-06 & 6.15e-05 & 28.32 \\
    \midrule
    \textbf{log\_obj} & 9.02e-05 & 6.47e-04 & 7.75e-06 & 6.01e-05 & 5.43 \\
    \midrule
    \textbf{soe\_1} & -     & -     & -     & -     & $>$ 1 day \\
    \midrule
    \textbf{soe\_2} & 5.88e-02 & 1.19e-03 & 6.57e-03 & 1.45e-02 & 131.86 \\
    \bottomrule
    \end{tabular}%
    }
  \label{tab:RPMADEuroBonds}%
\end{table}%

\noindent
Tables 
\ref{tab:RPMADETF-EC},
\ref{tab:RPMADEuroBonds}, and
\ref{tab:RPMADCIB-mix}
report the computational results obtained for the three datasets analyzed.
Both \textbf{soe\_1} and \textbf{soe\_2} seem to be the least accurate and efficient formulations.
Note that, with this experimental setup, the most computationally demanding formulation \textbf{soe\_1} fails to find a solution in a reasonable time (1 day).

%
Methods based on least-squares formulations, i.e., \textbf{ls\_rel} and \textbf{ls\_abs}, achieve intermediate performances for all the datasets.
%
\begin{table}[htbp]
  \centering
  \caption{Experimental results for the MAD-RP methods using CIB-mix, where $1/n = 0.0909$.}
  \medskip
  \scalebox{0.9}{
    \begin{tabular}{|c|c|c|c|c|c|}
    \toprule
    \textbf{Formulation} & $\boldsymbol{\sqrt{F(x)}}$ & \textbf{MAD(x)} & \textbf{MeanAbsDev} & \textbf{MaxAbsDev} & \textbf{Time (secs.)} \\
    \midrule
    \textbf{ls\_rel} & 2.48e-03 & 1.40e-03 & 6.16e-04 & 1.54e-03 & 2.66 \\
    \midrule
    \textbf{ls\_abs} & 2.32e-03 & 1.40e-03 & 5.63e-04 & 1.54e-03 & 2.09 \\
    \midrule
    \textbf{log\_constr} & 7.75e-04 & 1.40e-03 & 2.11e-04 & 3.48e-04 & 2.52 \\
    \midrule
    \textbf{log\_obj} & 7.74e-04 & 1.40e-03 & 2.13e-04 & 3.51e-04 & 1.29 \\
    \midrule
    \textbf{soe\_1} & -     & -     & -     & -     & $>$ 1 day \\
    \midrule
    \textbf{soe\_2} & 7.98e-02 & 1.79e-03 & 1.83e-02 & 5.44e-02 & 18.65 \\
    \bottomrule
    \end{tabular}%
    }
  \label{tab:RPMADCIB-mix}%
\end{table}
Finally, on one hand, the two logarithmic formulations essentially lead to the same level of accuracy. On the other hand, \textbf{log\_obj} seems to be the most efficient method for the three datasets considered.

\subsubsection{Comparison of some portfolio selection approaches} \label{sec:insample}

To better illustrate the concept of risk allocation,
 we report here a graphical comparison of
 the five portfolio selection strategies listed below.
 \begin{enumerate}
\item MAD-RP: the portfolio obtained by solving
the \textbf{log\_obj} formulation.
%
\item MinMAD: the global minimum MAD portfolio, obtained by implementing problem (3.6) of \cite{Konno91} without constraining the portfolio expected return.
\item MinV: the global minimum variance portfolio, as in \cite{Mark:52}.
\item Vol-RP: the portfolio obtained by means of the Risk Parity approach using volatility as a risk measure \citep[Problem (7) of][]{Maillard2010}.
\item EW: the Equally Weighted portfolio, where the capital is uniformly distributed among all the assets in the investment universe, i.e., $x_{i}=\frac{1}{n}$ for each $i\in \{1,\ldots,n\}$.
\end{enumerate}
More precisely, we consider an investment universe of 5 assets and we evaluate
weights and Relative Risk Contributions (RRCs), obtained from all the portfolio
strategies analyzed.
Note that for each portfolio approach, RRCs are computed by
considering both volatility and MAD as risk measures.
\begin{figure}
  \centering
  \includegraphics[width=.99\textwidth]{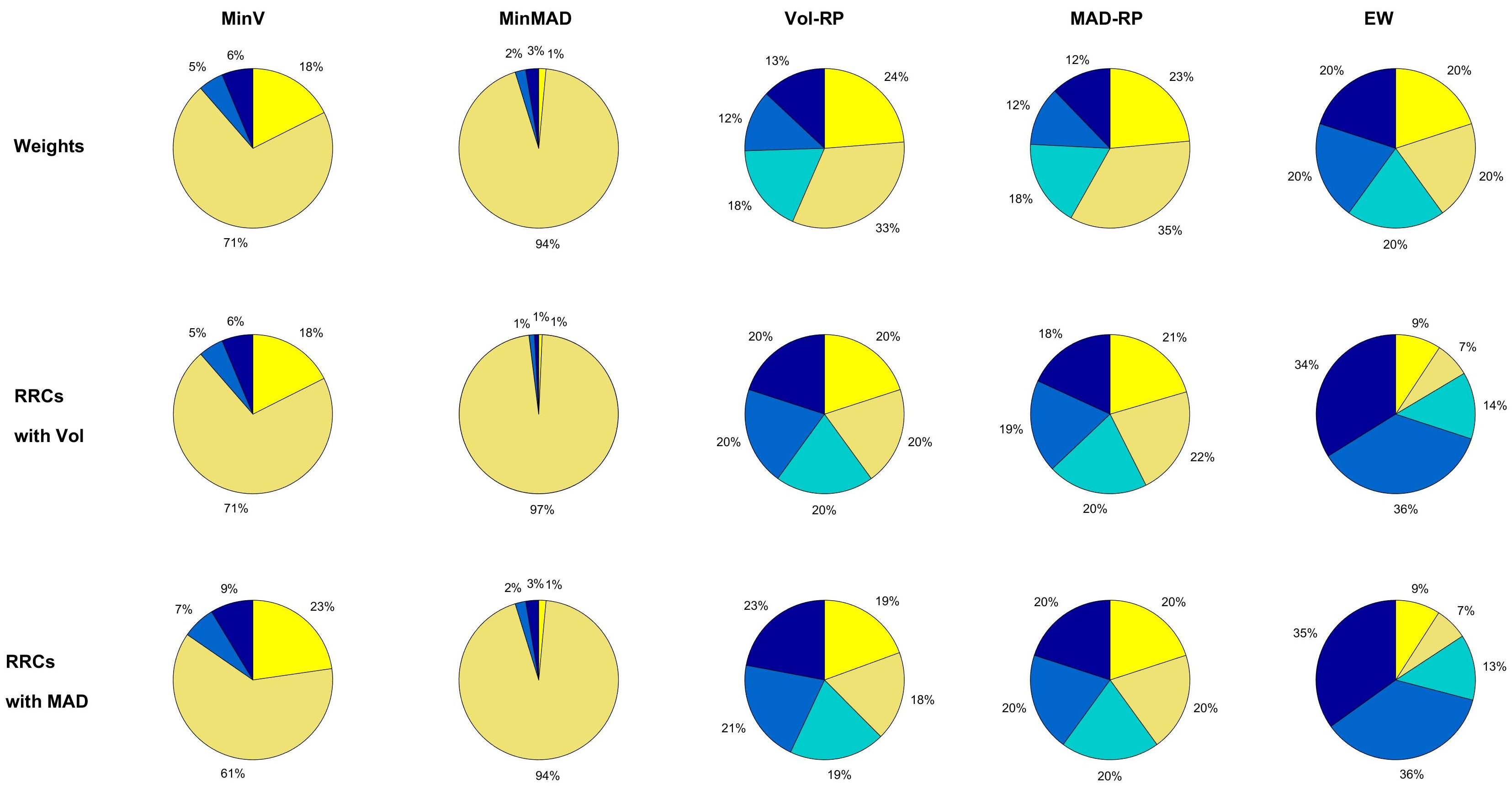}
  \caption{Weights (top), RRCs with volatility (middle), and RRCs with volatility (bottom) pie charts for 5 assets belonging to CIB-mix.}\label{fig:ISWeightsTRCCIB}
\end{figure}

 In Figure~\ref{fig:ISWeightsTRCCIB}, we report
 the results obtained for 5 assets belonging to CIB-mix\footnote{IT Benchmark 10 years DS Govt. Index, IT Benchmark 5 years DS Govt. Index, IT Benchmark 15 years DS Govt. Index, S\&P GSCI Agriculture Total Return, and S\&P GSCI Industrial Metals Total Return for the period 1/2013-12/2018.}.
 We can observe that the Vol-RP and MAD-RP portfolios generally show
 a slightly different portfolio composition but both appear to be well-diversified
 similar to the EW portfolio.
 A converse situation can be pointed out for
the minimum risk portfolios.
Indeed, as shown on the top of Figure \ref{fig:ISWeightsTRCCIB},
both the MinV and MinMAD approaches
concentrate 71\% and 94\% of the capital
in only one asset, respectively.
%
%
%
%
In the middle (bottom) of Figure \ref{fig:ISWeightsTRCCIB}, for each portfolio strategy,
we can see how much each asset contributes (in percentage) to the whole portfolio volatility (MAD).
Clearly, for Vol-RP (MAD-RP), each asset equally contributes to the portfolio volatility (MAD).
Furthermore, by construction, for a fixed risk measure, the asset weights of the minimum risk portfolio coincide with their RRCs.
It is also interesting to observe that, for the EW portfolio, RRCs are uneven regardless of the risk measure used.

  \subsubsection{Out-of-sample analysis} \label{sec:out-of-sample}

In this section, we discuss the main results of the out-of-sample performance
analysis for all the portfolio selection strategies listed in Section \ref{sec:insample}.
For this analysis, we adopt a Rolling Time Window scheme of evaluation:
we allow for the possibility of rebalancing the portfolio composition during the holding period, at fixed intervals. To calibrate the portfolio selection models,
we use 2 years (i.e., 500 days) for the in-sample window,
while we choose 20 days for the out-of-sample window
with rebalancing allowed every 20 days.	
To evaluate the out-of-sample performances of the five portfolios analyzed,
we use the following quantities typically adopted in the literature \citep[see, e.g.,][]{Bacon2008,Rachev2008,bruni2017exact,cesarone2015linear,cesarone2016optimally,cesarone2019risk,cesarone2022does}.
Let $R^{out}$ denote the out-of-sample portfolio return, and $W_{t}=W_{t-1}(1+R^{out})$
the portfolio wealth at time $t$.
\begin{itemize}
\item \textbf{Mean} is the annualized average portfolio return, i.e., $\mu^{out}=\E[R^{out}]$. Obviously, higher values are associated to higher performances.
\item Volatility (\textbf{Vol}) is the annualized standard deviation, computed as $\sigma^{out}=\sqrt{\E[(R^{out}-\mu^{out})^{2}]}$. Since it measures the portfolio risk, lower values are preferable.
\item Mean Absolute Deviation (\textbf{MAD}) of $R^{out}$, namely
$\MAD(R^{out}):=\E[\abs{R^{out}-\mu^{out}}]$.
\item Maximum Drawdown (\textbf{MaxDD}): denoting the \textit{drawdowns} by
\begin{equation*}
  dd_{t}=\frac{W_{t}-\displaystyle\max_{1\leq \tau\leq t}W_{\tau}}{\displaystyle\max_{1\leq \tau\leq t}W_{\tau}}\mbox{,}\quad t\in\{1,\ldots,T\},
\end{equation*}
%
\textbf{MaxDD} is defined as
$
MaxDD=\displaystyle\min_{1\leq t\leq T}dd_{t}\mbox{.}
$

\noindent
It measures the loss from the observed peak in the returns; hence, it will always have negative sign (or, in the best case scenario, it will be equal to 0), meaning that the higher the better.
\item \textbf{Ulcer} index, i.e.,
\begin{equation*}
UI=\sqrt{\frac{\sum\limits_{t=1}^{T}dd_{t}^{2}}{T}}\mbox{.}
\end{equation*}
It evaluates the depth and the duration of drawdowns in prices over the out-of-sample period.
Lower \textbf{Ulcer} values are associated to better portfolio performances.
\item \textbf{Sharpe} ratio measures the gain per unit risk and it is defined as
\begin{equation*}
SR^{out}=\frac{\mu^{out}-r_{f}}{\sigma^{out}},
\end{equation*}
where we choose $r_{f}=0$. The larger this value, the better the performance.
\item \textbf{Sortino} ratio is similar to the Sharpe ratio but uses another risk measure, i.e., the Target Downside Deviation, $TDD=\sqrt{\E[((R^{out}-r_{f})^-)^{2}]}$. Therefore, the Sortino Ratio is defined as
  											\begin{equation*}
  											SoR=\frac{\mu^{out}-r_{f}}{TDD},
  											\end{equation*}
  											where $r_{f}=0$. Similar to the Sharpe ratio, the higher it is, the better the portfolio performance.
  											\item \textbf{Turnover}, i.e.,
  											\begin{equation*}
  											Turn=\frac{1}{Q}\sum\limits_{q=1}^{Q}\sum\limits_{k=1}^{n}|x_{q,k}-x_{q-1,k}|,
  											\end{equation*}
  %
  where $Q$ represents the number of rebalances, $x_{q,k}$ is the portfolio weight of asset $k$ after rebalancing, and $x_{q-1,k}$ is the portfolio weight before rebalancing at time $q$. Lower turnover values imply better portfolio performance.
  We point out that this definition of portfolio turnover is a proxy of the effective one, since it evaluates only the amount
of trading generated by the models at each rebalance,
without considering the trades due to changes in asset
prices between one rebalance and the next. Thus, by definition,
the turnover of the EW portfolio is zero.
\item \textbf{Rachev-5\%} ratio measures the upside potential, comparing right and left tail. Mathematically, it is computed as
  											\begin{equation*}
  											\frac{CVaR_{\alpha}(r_{f}-R^{out})}{CVaR_{\beta}(R^{out}-r_{f})},
  											\end{equation*}
  											where we choose $\alpha=\beta = 5\%$ and $r_{f}=0$.
  										\end{itemize}
 			
In Tables \ref{tab:OSETF500}, \ref{tab:OSEuroBonds500} and \ref{tab:OSItBondsandCommodities500}, we report the experimental results obtained for the three dataset, ETF-EC, EuroBonds and CIB-mix, respectively.
For each performance measure, we show with different colors the rank of the results of the five portfolio selection strategies analyzed.
More precisely, the colors range
from deep-green to deep-red, where deep-green represents
the best performance while deep-red the worst
one.
\begin{table}[htbp]
  \centering
  \caption{Out-of-sample results for ETF-EC.}
  \scalebox{0.89}{
    \begin{tabular}{|l|c|c|c|c|c|}
    \toprule
          & \textbf{MinV} & \textbf{MinMAD} & \textbf{Vol-RP} & \textbf{MAD-RP} & \textbf{EW} \\
    \midrule
    \textbf{Mean} & \cellcolor[rgb]{ 1,  0,  0}0.0485 & \cellcolor[rgb]{ 1,  .753,  0}0.0605 & \cellcolor[rgb]{ .573,  .816,  .314}0.0666 & \cellcolor[rgb]{ .573,  .816,  .314}0.0666 & \cellcolor[rgb]{ 0,  .69,  .314}0.0703 \\
    \midrule
    \textbf{Vol} & \cellcolor[rgb]{ .573,  .816,  .314}0.0816 & \cellcolor[rgb]{ 0,  .69,  .314}0.0801 & \cellcolor[rgb]{ 1,  .753,  0}0.0907 & \cellcolor[rgb]{ 1,  1,  0}0.0904 & \cellcolor[rgb]{ 1,  0,  0}0.1027 \\
    \midrule
    \textbf{MAD} & \cellcolor[rgb]{ .573,  .816,  .314}1.0087 & \cellcolor[rgb]{ 0,  .69,  .314}0.9777 & \cellcolor[rgb]{ 1,  .753,  0}1.0990 & \cellcolor[rgb]{ 1,  1,  0}1.0941 & \cellcolor[rgb]{ 1,  0,  0}1.2485 \\
    \midrule
    \textbf{MaxDD} & \cellcolor[rgb]{ .573,  .816,  .314}-0.1532 & \cellcolor[rgb]{ 0,  .69,  .314}-0.1514 & \cellcolor[rgb]{ 1,  1,  0}-0.1959 & \cellcolor[rgb]{ 1,  .753,  0}-0.1989 & \cellcolor[rgb]{ 1,  0,  0}-0.2218 \\
    \midrule
    \textbf{Ulcer} & \cellcolor[rgb]{ .573,  .816,  .314}0.0627 & \cellcolor[rgb]{ 0,  .69,  .314}0.0601 & \cellcolor[rgb]{ 1,  1,  0}0.0819 & \cellcolor[rgb]{ 1,  .753,  0}0.0838 & \cellcolor[rgb]{ 1,  0,  0}0.0960 \\
    \midrule
    \textbf{Sharpe} & \cellcolor[rgb]{ 1,  0,  0}0.0376 & \cellcolor[rgb]{ 0,  .69,  .314}0.0477 & \cellcolor[rgb]{ 1,  1,  0}0.0465 & \cellcolor[rgb]{ .573,  .816,  .314}0.0466 & \cellcolor[rgb]{ 1,  .753,  0}0.0433 \\
    \midrule
    \textbf{Sortino} & \cellcolor[rgb]{ 1,  0,  0}0.0527 & \cellcolor[rgb]{ 0,  .69,  .314}0.0671 & \cellcolor[rgb]{ 1,  1,  0}0.0641 & \cellcolor[rgb]{ .573,  .816,  .314}0.0643 & \cellcolor[rgb]{ 1,  .753,  0}0.0597 \\
    \midrule
    \textbf{Turnover} & \cellcolor[rgb]{ 1,  .753,  0}0.1184 & \cellcolor[rgb]{ 1,  0,  0}0.1477 & \cellcolor[rgb]{ 0,  .69,  .314}0.0242 & \cellcolor[rgb]{ 1,  1,  0}0.0329 & 0 \\
    \midrule
    \textbf{Rachev-5\%} & \cellcolor[rgb]{ .573,  .816,  .314}0.8473 & \cellcolor[rgb]{ 0,  .69,  .314}0.8493 & \cellcolor[rgb]{ 1,  1,  0}0.8381 & \cellcolor[rgb]{ 1,  .753,  0}0.8360 & \cellcolor[rgb]{ 1,  0,  0}0.8270 \\
    \bottomrule
    \end{tabular}%
    }
  \label{tab:OSETF500}%
\end{table}%

As shown in Table \ref{tab:OSETF500}, the MinMAD portfolio seems to have better performance with respect to the other portfolios with two exceptions: \textbf{Turnover} and \textbf{Mean}.
Indeed, \textbf{Turnover} of the Vol-RP and MAD-RP portfolios are much smaller than that of MinMAD.
The EW portfolio guarantees the highest \textbf{Mean}, followed by the Vol-RP and MAD-RP portfolios.
We can observe that these two RP strategies provide very similar values
for all the performance measures considered,
and their out-of-sample performances, both in terms of risk and of gain, are located in the middle between those of the minimum risk portfolios and those of the EW portfolio, as theoretically expected in the in-sample case (see Remarks \ref{rem:MinMADvsMAD-ERC} and \ref{rem:EWvsMAD-ERC}).
The MinV and MinMAD portfolios appear to be the least risky, both in terms of \textbf{Vol}, \textbf{MAD}, \textbf{MaxDD}, and \textbf{Ulcer}.
Furthermore, even though MAD-RP tends to be riskier than the minimum risk strategies,
it achieves the second-largest values of \textbf{Sharpe} and \textbf{Sortino}.
Note that we do purposely leave the EW portfolio out of the comparison concerning \textbf{Turnover}, since its value is 0 by construction.
%
\begin{table}[htbp]
  \centering
  \caption{Out-of-sample results for EuroBonds.}
  \scalebox{0.89}{
    \begin{tabular}{|l|c|c|c|c|c|}
    \toprule
          & \textbf{MinV} & \textbf{MinMAD} & \textbf{Vol-RP} & \textbf{MAD-RP} & \textbf{EW} \\
    \midrule
    \textbf{Mean} & \cellcolor[rgb]{ 1,  .753,  0}-0.0004 & \cellcolor[rgb]{ 1,  0,  0}-0.0006 & \cellcolor[rgb]{ 1,  1,  0}0.0082 & \cellcolor[rgb]{ .573,  .816,  .314}0.0085 & \cellcolor[rgb]{ 0,  .69,  .314}0.0272 \\
    \midrule
    \textbf{Vol} & \cellcolor[rgb]{ .573,  .816,  .314}0.0046 & \cellcolor[rgb]{ 0,  .69,  .314}0.0043 & \cellcolor[rgb]{ 1,  1,  0}0.0117 & \cellcolor[rgb]{ 1,  .753,  0}0.0120 & \cellcolor[rgb]{ 1,  0,  0}0.0379 \\
    \midrule
    \textbf{MAD} & \cellcolor[rgb]{ .573,  .816,  .314}0.0442 & \cellcolor[rgb]{ 0,  .69,  .314}0.0425 & \cellcolor[rgb]{ 1,  1,  0}0.1323 & \cellcolor[rgb]{ 1,  .753,  0}0.1346 & \cellcolor[rgb]{ 1,  0,  0}0.4334 \\
    \midrule
    \textbf{MaxDD} & \cellcolor[rgb]{ .573,  .816,  .314}-0.0122 & \cellcolor[rgb]{ 0,  .69,  .314}-0.0119 & \cellcolor[rgb]{ 1,  1,  0}-0.0204 & \cellcolor[rgb]{ 1,  .753,  0}-0.0206 & \cellcolor[rgb]{ 1,  0,  0}-0.0716 \\
    \midrule
    \textbf{Ulcer} & \cellcolor[rgb]{ 0,  .69,  .314}0.0047 & \cellcolor[rgb]{ .573,  .816,  .314}0.0049 & \cellcolor[rgb]{ 1,  .753,  0}0.0064 & \cellcolor[rgb]{ 1,  1,  0}0.0062 & \cellcolor[rgb]{ 1,  0,  0}0.0223 \\
    \midrule
    \textbf{Sharpe} & \cellcolor[rgb]{ 1,  .753,  0}- & \cellcolor[rgb]{ 1,  0,  0}- & \cellcolor[rgb]{ 1,  1,  0}0.0443 & \cellcolor[rgb]{ .573,  .816,  .314}0.0447 & \cellcolor[rgb]{ 0,  .69,  .314}0.0454 \\
    \midrule
    \textbf{Sortino} & \cellcolor[rgb]{ 1,  .753,  0}- & \cellcolor[rgb]{ 1,  0,  0}- & \cellcolor[rgb]{ 1,  1,  0}0.0604 & \cellcolor[rgb]{ .573,  .816,  .314}0.0611 & \cellcolor[rgb]{ 0,  .69,  .314}0.0631 \\
    \midrule
    \textbf{Turnover} & \cellcolor[rgb]{ 1,  .753,  0}0.1119 & \cellcolor[rgb]{ 1,  0,  0}0.1153 & \cellcolor[rgb]{ 0,  .69,  .314}0.0199 & \cellcolor[rgb]{ 1,  1,  0}0.0401 & 0 \\
    \midrule
    \textbf{Rachev-5\%} & \cellcolor[rgb]{ 1,  0,  0}0.8212 & \cellcolor[rgb]{ 1,  .753,  0}0.8297 & \cellcolor[rgb]{ 1,  1,  0}0.8387 & \cellcolor[rgb]{ .573,  .816,  .314}0.8465 & \cellcolor[rgb]{ 0,  .69,  .314}0.9054 \\
    \bottomrule
    \end{tabular}%
    }
  \label{tab:OSEuroBonds500}%
\end{table}%

In Table \ref{tab:OSEuroBonds500}, we report the computational results for EuroBonds.
Again as expected, the minimum risk portfolios show the best risk performances but with the worst values of out-of-sample expected return.
%
Conversely, the EW portfolio has the best values in terms of Gain-Risk ratios and the worst in terms of risk.
Also in this case, the performances of the RP strategies generally lie between those of EW and those of the minimum risk portfolios.
%
Furthermore, the Vol-RP portfolio shows the lowest value of \textbf{Turnover}, followed by MAD-RP.
%
\begin{table}[htbp]
  \centering
  \caption{Out-of-sample results for CIB-mix.}
  \scalebox{0.89}{
    \begin{tabular}{|l|c|c|c|c|c|}
    \toprule
          & \textbf{MinV} & \textbf{MinMAD} & \textbf{Vol-RP} & \textbf{MAD-RP} & \textbf{EW} \\
          \midrule
    \textbf{Mean} & \cellcolor[rgb]{ 1,  .753,  0}0.0025 & \cellcolor[rgb]{ 1,  0,  0}0.0023 & \cellcolor[rgb]{ .573,  .816,  .314}0.0042 & \cellcolor[rgb]{ 0,  .69,  .314}0.0073 & \cellcolor[rgb]{ 1,  1,  0}0.0028 \\
    \midrule
    \textbf{Vol} & \cellcolor[rgb]{ 0,  .69,  .314}0.0201 & \cellcolor[rgb]{ .573,  .816,  .314}0.0208 & \cellcolor[rgb]{ 1,  1,  0}0.0290 & \cellcolor[rgb]{ 1,  .753,  0}0.0298 & \cellcolor[rgb]{ 1,  0,  0}0.0605 \\
    \midrule
    \textbf{MAD} & \cellcolor[rgb]{ 0,  .69,  .314}0.1110 & \cellcolor[rgb]{ .573,  .816,  .314}0.1198 & \cellcolor[rgb]{ 1,  1,  0}0.3125 & \cellcolor[rgb]{ 1,  .753,  0}0.3296 & \cellcolor[rgb]{ 1,  0,  0}0.7194 \\
    \midrule
    \textbf{MaxDD} & \cellcolor[rgb]{ 0,  .69,  .314}-0.0488 & \cellcolor[rgb]{ .573,  .816,  .314}-0.0498 & \cellcolor[rgb]{ 1,  1,  0}-0.0541 & \cellcolor[rgb]{ 1,  .753,  0}-0.0559 & \cellcolor[rgb]{ 1,  0,  0}-0.1335 \\
    \midrule
    \textbf{Ulcer} & \cellcolor[rgb]{ 0,  .69,  .314}0.0080 & \cellcolor[rgb]{ .573,  .816,  .314}0.0086 & \cellcolor[rgb]{ 1,  .753,  0}0.0261 & \cellcolor[rgb]{ 1,  1,  0}0.0244 & \cellcolor[rgb]{ 1,  0,  0}0.0657 \\
    \midrule
    \textbf{Sharpe} & \cellcolor[rgb]{ 1,  1,  0}0.0079 & \cellcolor[rgb]{ 1,  .753,  0}0.0069 & \cellcolor[rgb]{ .573,  .816,  .314}0.0092 & \cellcolor[rgb]{ 0,  .69,  .314}0.0155 & \cellcolor[rgb]{ 1,  0,  0}0.0030 \\
    \midrule
    \textbf{Sortino} & \cellcolor[rgb]{ 1,  1,  0}0.0097 & \cellcolor[rgb]{ 1,  .753,  0}0.0084 & \cellcolor[rgb]{ .573,  .816,  .314}0.0125 & \cellcolor[rgb]{ 0,  .69,  .314}0.0214 & \cellcolor[rgb]{ 1,  0,  0}0.0042 \\
    \midrule
    \textbf{Turnover} & \cellcolor[rgb]{ 1,  .753,  0}0.0471 & \cellcolor[rgb]{ 1,  1,  0}0.0382 & \cellcolor[rgb]{ 0,  .69,  .314}0.0288 & \cellcolor[rgb]{ 1,  0,  0}0.0886 & 0 \\
    \midrule
    \textbf{Rachev-5\%} & \cellcolor[rgb]{ 1,  0,  0}0.8123 & \cellcolor[rgb]{ 1,  .753,  0}0.8326 & \cellcolor[rgb]{ 1,  1,  0}0.9074 & \cellcolor[rgb]{ .573,  .816,  .314}0.9217 & \cellcolor[rgb]{ 0,  .69,  .314}0.9924 \\
    \bottomrule
    \end{tabular}%
    }
  \label{tab:OSItBondsandCommodities500}%
\end{table}%

In Table \ref{tab:OSItBondsandCommodities500}, we report the computational results for CIB-mix.
We observe here that the MAD-RP portfolio presents very good performances:
it has the highest \textbf{Mean}, \textbf{Sharpe} and \textbf{Sortino} values (followed by the Vol-RP portfolio), and the second-highest \textbf{Rachev-5\%}, below that of EW.
Again, the minimum risk portfolios, particularly MinV, have the best \textbf{Vol}, \textbf{MAD}, \textbf{MaxDD}, and \textbf{Ulcer}. Furthermore, the Vol-RP portfolio provides the lowest \textbf{Turnover}, followed by MinMAD.

In Figure \ref{fig:W500}, for ETF-EC, EuroBonds and CIB-mix, we report the time evolution of the portfolio wealth, investing one unit of the currency at the beginning of the investment horizon.
%
We can note that the EW portfolio often provides the highest values of wealth and that the Risk Parity portfolios are systematically better than the minimum risk ones.
The Vol-RP and MAD-RP portfolios tend to perform similarly, with the only exception of the CIB-mix dataset, where the MAD-RP portfolio performs significantly better and is often the best among all.
  										
  										\begin{figure}[h!]
  											\centering
  											\includegraphics[width=.47\textwidth]{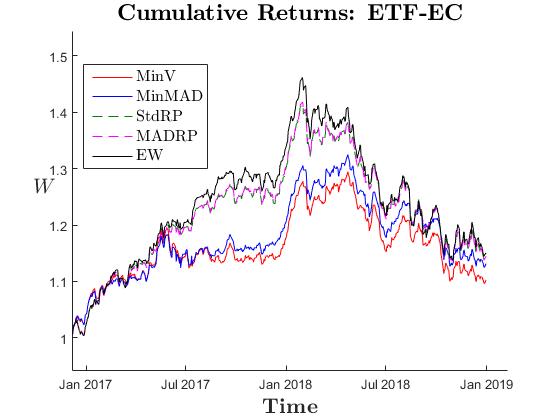}\quad
  											\includegraphics[width=.47\textwidth]{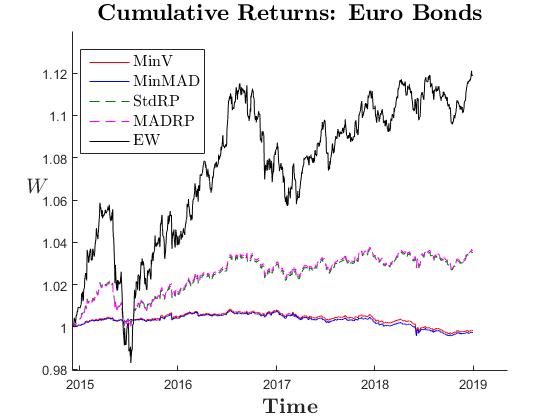}
  											
  											\medskip
  											
  											\includegraphics[width=.47\textwidth]{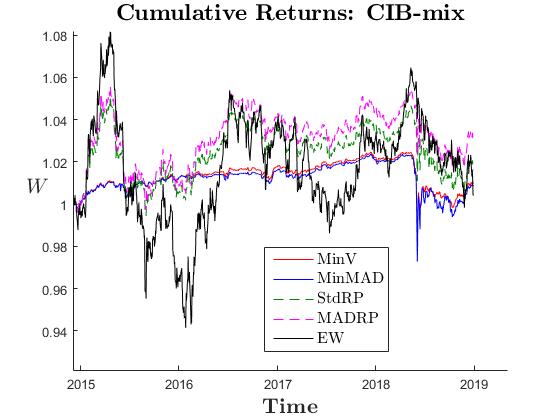}
  											
  											\caption{Time evolution of the portfolio wealth for ETF-EC, EuroBonds and CIB-mix.}
  											\label{fig:W500}
  										\end{figure}
%

%
\begin{table}[htbp]
  \centering
  \caption{Analysis of the differences in the portfolio weights of the Risk Parity and minimum risk strategies.
}
    \begin{tabular}{|c|c|c|c|}
    \toprule
          & \multicolumn{1}{l|}{\textbf{ETF-EC}} & \multicolumn{1}{l|}{\textbf{EuroBonds}} & \multicolumn{1}{l|}{\textbf{CIB-mix}} \\
    \midrule
    \textbf{Vol-RP  vs. MAD-RP} & 0.0023 & 0.0020 & 0.0197 \\
    \midrule
    \textbf{MinV vs. MinMAD} & 0.0097 & 0.0038 & 0.0154 \\
    \bottomrule
    \end{tabular}%
  \label{tab:WeightDifferences}%
\end{table}%
Since the out-of-sample performance measures for the Risk Parity portfolios using volatility and MAD generally tend to appear very close, we compute a simple metric to better highlight the differences between the portfolio weights selected by these two approaches. More precisely, for each dataset listed in Table \ref{tab:Datasets}, we use the following metric named Overall Weight Differences (OWD):
\begin{equation}
  OWD =\frac{1}{Q} \sum\limits_{q=1}^{Q} \left( \frac{1}{n} \sum\limits_{k=1}^{n} |x_{q,k}^{Vol-RP}-x_{q,k}^{MAD-RP}| \right),
\end{equation}
where, again, $Q$ represents the number of rebalances, $x_{q,k}^{Vol-RP}$ and $x_{q,k}^{MAD-RP}$ are the weights of asset $k$ obtained, at time $q$, by the RP strategy using volatility and MAD, respectively.
Table \ref{tab:WeightDifferences} shows the results obtained for all datasets considered, in which the analysis of the differences in the portfolio weights of the MinV and MinMAD approaches is also included.
For each dataset, we can observe that the OWD values computed for the Risk Parity and minimum risk strategies appear to be of the same order of magnitude.
Furthermore, as highlighted for the out-of-sample performance results, it seems that the greatest portfolio allocation differences can be observed for the CIB-mix dataset.									
  										
\section{Conclusions} \label{sec:ConclusionsMADRP}

In this paper, we have proposed an extension of the now popular Risk Parity approach for volatility to the Mean Absolute Deviation (MAD) as a portfolio risk measure.
%

From a theoretical viewpoint, we have discussed the subdifferentiability and additivity features of MAD, highlighting their usefulness in the Risk Parity strategy,
and we have established
the existence and uniqueness results for the MAD-RP portfolio.
Furthermore, we have presented several mathematical formulations for finding the MAD-RP portfolios practically and we have tested them on three real-world datasets both in terms of efficiency and accuracy.
We have observed that the system-of-equation formulations seem to be the least accurate and efficient. Methods based on least-squares formulations show intermediate performances for all the datasets, while those based on logarithmic formulations appear to be the best.
%

From an experimental viewpoint,
we have presented an extensive empirical analysis
comparing the out-of-sample performance obtained from the global minimum volatility and MAD
portfolios, the volatility and MAD Risk Parity portfolios, and the Equally Weighted
portfolio.
In terms of out-of-sample risk, these results confirm the theoretical (in-sample) properties of the RP portfolio, namely its risk lies between that of the minimum risk portfolio and the risk of the Equally Weighted portfolio.
The RP portfolios always show positive annual expected returns for all datasets, even though they are smaller than those of the Equally Weighted portfolio, which is however much riskier.
Furthermore, in terms of Gain-Risk ratios, the
Risk Parity portfolios tend to provide medium-high performances, often between those of EW and those of minimum risk portfolios; however, in few cases, they are better.
We point out that the two Risk Parity portfolios show similar values for all performance measures.
However, when directly comparing Vol-RP with MAD-RP, the latter usually provides slightly higher profitability at the expense of slightly higher risk and turnover.
Furthermore, for each dataset, we have observed that the portfolio allocation differences computed through the Risk Parity and minimum risk strategies using volatility and MAD seem to be of the same order of magnitude.

Each model tested tends to respond to different requirements related to different risk attitudes of the investors.
On one hand, as expected, the
minimum risk models are advisable for risk-averse
investors, avoiding as much as possible any shock
represented by deep drawdowns.
On the other hand, the RP strategies seem to be appropriate for investors mildly adverse to the total portfolio risk. These investors could be willing to waive a bit of safety (w.r.t. that of the minimum risk portfolios) and a bit of gain (w.r.t. that of the EW portfolio) to achieve a more balanced risk allocation and a more diversified portfolio.
Furthermore, although the Equally Weighted approach embodies the concept of high diversification and achieves good out-of-sample expected returns, it generates portfolios with very high out-of-sample risk both in terms of volatility, MAD, Maximum Drawdown, and Ulcer index. Therefore, according to our findings, the EW portfolio seems to be advisable for sufficiently risk-seeking investors who try to maximize gain without worrying about periods of deep drawdowns.

\appendix
\section{Proofs of the results in Section \ref{sec:add}}\label{app:Lemma}
\begin{proof}[Proof of Lemma \ref{MADAdditive}]
By the triangle inequality on $\mathbb{R}$, we have that
\begin{equation}\label{eq:add0}
 \abs{X-\E[X]} + \abs{Y-\E[Y]} -
 \abs{(X-\E[X]) + (Y-\E[Y])} \geq 0  \quad \mathbb{P}\mbox{-a.s.}
\end{equation}
Note that the additivity condition \eqref{eq:add} is equivalent to
\begin{equation}\label{eq:add1}
  \E\sqb{\abs{ (X-\E[X])+(Y-\E[Y])} -
 \abs{X-\E[X]} - \abs{Y-\E[Y]}} = 0
\end{equation}
By \eqref{eq:add0}, the random variable $\abs{X-\E[X]} + \abs{Y-\E[Y]} -
\abs{(X-\E[X]) + (Y-\E[Y])} $ is $\P$-a.s. nonnegative. Hence, \eqref{eq:add1} is equivalent to
  \begin{equation}\label{eq:add2}
  \abs{(X-\E[X]) + (Y-\E[Y])} - \abs{X-\E[X]} - \abs{Y-\E[Y]}
    = 0   \quad \P\mbox{-a.s.}
  \end{equation}
Now, for each $a,b\in \mathbb{R}$, we have $\lvert a+b \rvert = \lvert a\rvert+\lvert b\rvert$
if and only if $ab\geq 0$. It follows that \eqref{eq:add2} is equivalent to \eqref{eq:add3}, which completes the proof.
\end{proof}	

%

\begin{proof}[Proof of Theorem \ref{th:AddGen}]
	It is clear that (i) implies (ii) and (iii). Next, we show that (ii) implies (i). Let $\boldsymbol{x}\in\R^n_+$. If $\boldsymbol{x}=0$, then the equality in (i) becomes trivial. Suppose that $\boldsymbol{x}\neq 0$. In particular, $a:=\sum_{i=1}^n x_i>0$. Then, by (ii) and the positive homogeneity of MAD, we have
	\[
	\MAD\of{\sum_{i=1}^nx_i Y_i}=a\MAD\of{\sum_{i=1}^n \frac{x_i}{a}Y_i}=a\sum_{i=1}^n \frac{x_i}{a}\MAD(Y_i)=\sum_{i=1}^n x_i\MAD(Y_i).
	\]
	Hence, (i) follows.
	
	To prove that (iii) implies (iv), let us fix $i,j\in\{1,\ldots,n\}$ with $i\neq j$ and set $I:=\{i,j\}$. Then, (iii) yields $\MAD(Y_i+Y_j)=\MAD(Y_i)+\MAD(Y_j)$. Hence, by Theorem \ref{MADAdditive}, we have $(Y_i-\E[Y_i])(Y_j-\E[Y_j])\geq 0$ $\P$-a.s., that is, (iv) holds.
	
	To complete the proof, we show that (iv) implies (i). We prove the equality in (i) by induction on $k(\boldsymbol{x}):=\min\{i\geq 0\mid x_{i+1}=\ldots=x_n=0\}$. (We assume that $k(\boldsymbol{x})=n$ if $x_n\neq 0$.) Note that $0\leq k(\boldsymbol{x})\leq n$. The base case $k(\boldsymbol{x})=0$ is trivial since $x_1=\ldots=x_n=0$ in this case. Let $k\leq n-1$ and suppose that the equality in (i) holds for every $\bar{\boldsymbol{x}}\in\R^n_+$ with $k(\bar{\boldsymbol{x}})=k$. Let $\boldsymbol{x}\in\R^n_+$ with $k(\boldsymbol{x})=k+1$. Then,
	\begin{equation}\label{eq:multivar1}
	\MAD\of{\sum_{i=1}^n x_iY_i}=\MAD\of{\sum_{i=1}^{k+1} x_iY_i}=\MAD\of{\sum_{i=1}^{k} x_iY_i+x_{k+1}Y_{k+1}}.
	\end{equation}
	Note that we have
	\begin{align*}
	&\of{\sum_{i=1}^{k} x_iY_i-\E\sqb{\sum_{i=1}^{k} x_iY_i}}\of{x_{k+1}Y_{k+1}-x_{k+1}\E[Y_{k+1}]}\\
	&=x_{k+1}\sum_{i=1}^{k}x_i(Y_i-\E[Y_i])(Y_{k+1}-\E[Y_{k+1}])\geq 0
	\end{align*}
	thanks to (iv) and the fact that $\boldsymbol{x}\in\R^n_+$. Hence, by Theorem \ref{MADAdditive},
	\begin{equation}\label{eq:multivar2}
	\MAD\of{\sum_{i=1}^{k} x_iY_i+x_{k+1}Y_{k+1}}=\MAD\of{\sum_{i=1}^{k} x_iY_i}+\MAD(x_{k+1}Y_{k+1}).
	\end{equation}
	On the other hand, we may write $\sum_{i=1}^{k} x_iY_i=\sum_{i=1}^{n} \bar{x}_iY_i$, where $\bar{\boldsymbol{x}}:=(x_1,\ldots,x_k,0,\ldots,0)$. Since $k(\bar{
		\boldsymbol{x}})=k$, we may use the induction hypothesis and obtain
	\begin{equation}\label{eq:multivar3}
	\MAD\of{\sum_{i=1}^{k} x_iY_i}=\MAD\of{\sum_{i=1}^{n} \bar{x}_iY_i}=\sum_{i=1}^n \bar{x}_i\MAD(Y_i)=\sum_{i=1}^k x_i\MAD(Y_i).
	\end{equation}
	Combining \eqref{eq:multivar1}, \eqref{eq:multivar2}, \eqref{eq:multivar3} and using the positive homogeneity of MAD give that
	\[
	\MAD\of{\sum_{i=1}^n x_iY_i}=\sum_{i=1}^k x_i\MAD(Y_i)+x_{k+1}\MAD(Y_{k+1})=\sum_{i=1}^n x_i\MAD(Y_i),
	\]
	which concludes the induction argument.
	\end{proof}					

\section*{Conflict of interest}

We declare that we do not have any conflicting interests related to this article.

\section*{Acknowledgments}

This work was partially supported by the PRIN 2017 Project (no. 20177WC4KE), funded by the Italian Ministry of Education, University, and Research. We thank two anonymous referees for their useful comments. In particular, Example~\ref{ex1} is suggested by one of the referees.

{\footnotesize
\bibliographystyle{spbasic}
\bibliography{BibbaseMADRP20240117.bib}
}

\end{document}